\documentclass[12pt]{article}
\usepackage[margin=1in]{geometry}
\usepackage{hyperref}
\hypersetup{colorlinks, citecolor=blue, filecolor=blue, linkcolor=blue, urlcolor=blue}
\usepackage{graphicx}
\usepackage{url}
\usepackage[longnamesfirst]{natbib}
\usepackage{amsmath, amssymb, amsthm}
\usepackage{bbm}
\usepackage{mathtools}
\usepackage{etoolbox} 
\usepackage[title]{appendix}
\usepackage{engord}
\usepackage{float}
\usepackage{pdflscape}
\usepackage{booktabs}
\usepackage[noabbrev, nameinlink]{cleveref}
\usepackage{calc}
\usepackage{tikz, pgfplots}
\usetikzlibrary{calc}
\usetikzlibrary{arrows.meta}
\usetikzlibrary{decorations.pathreplacing}
\usepgfplotslibrary{fillbetween}
\pgfplotsset{compat=1.14}
\pgfplotsset{every axis label/.append style={font=\tiny}}
\usepackage[labelsep=period]{caption}
\usepackage{subcaption}
\usepackage{multirow}
\usepackage{xr}
\usepackage{setspace}
\usepackage{sectsty}
\usepackage[most]{tcolorbox}

\usepackage[most]{tcolorbox}
\usepackage{titlesec}

\newtcolorbox{mechanismbox}[1]{
    enhanced,
    breakable,
    colback=gray!2,
    colframe=gray!2,
    borderline west={1.2pt}{0pt}{black!55},
    boxrule=0pt,
    arc=0pt,
    left=9pt,
    right=6pt,
    top=6pt,
    bottom=6pt,
    title={#1},
    fonttitle=\bfseries,
    coltitle=black,
    attach title to upper={\par\vspace{2pt}}
}
\sectionfont{\large}
\subsectionfont{\normalsize}
\subsubsectionfont{\normalsize}

\newcommand{\topref}[2]{\texorpdfstring{\Cref{#2}}{#1{}\ref{#2}}}

\ExplSyntaxOn
\newcounter{save}
\seq_new:N \g_ephraim_save_seq
\NewDocumentEnvironment{save}{o+b}
{
	\refstepcounter{save}
	\IfValueT{#1}{\label{save@#1}}
	\seq_gput_right:Nn\g_ephraim_save_seq{#2}
}{}
\NewDocumentCommand{\load}{m}
{
	\clist_map_inline:nn {#1}
	{
		\seq_item:Nn\g_ephraim_save_seq
		{
			\getrefnumber{save@##1}
		}
		\par
	}
}
\NewDocumentCommand{\loadall}{}
{
	\seq_use:Nn\g_ephraim_save_seq{\par}
}
\ExplSyntaxOff

\newcommand{\comma}{,}

\newcommand{\D}{\mathrm{d}}
\newcommand{\switch}[1]{\left\{\begin{aligned}#1\end{aligned}\right.}

\newcommand{\st}{\text{s.t. }}

\newcommand{\rep}[2]{
  \prg_replicate:nn {#2} {#1}%
}

\newcommand{\htt}{\hat{\theta}}
\newcommand{\utt}{\tau}
\newcommand{\Eta}{\mathrm{H}}

\allowdisplaybreaks

\numberwithin{equation}{section}

\newtheorem{lemma}{Lemma}

\newtheorem{proposition}{Proposition}
\newtheorem{theorem}{Theorem}

\newtheorem{corollary}{Corollary}[theorem]

\newtheorem*{claim*}{Claim}
\newtheorem*{literature*}{Related literature}

\title{Coordinating Treatment Allocation and Recommendation\thanks{We thank Cuimin Ba, Aislinn Bohren, Kevin He, Navin Kartik, Xiao Lin, George Mailath, Margaret Meyer, Stephen Morris, Alessandro Pavan, Andrew Postlewaite, Daniel Rappoport, Doron Ravid, Collin Raymond, Juuso Toikka, Maren Vairo, and Rakesh Vohra, as well as participants at the Penn Micro Theory Lunch, the Midwest Trade \& Theory Conference, the Pennsylvania Economics Theory Conference (PETCO), the ACM Conference on Economics and Computation (EC'26), the 37th Stony Brook International Conference on Game Theory, and the North American Summer Meeting of the Econometric Society for valuable comments and suggestions.
}}

\author{Li Guo\thanks{University of Pennsylvania.
\href{mailto:gary16@sas.upenn.edu}{gary16@sas.upenn.edu}} \and Penghuan Yan\thanks{University of Pennsylvania.  \href{mailto:yanph@sas.upenn.edu}{yanph@sas.upenn.edu}} }

\date{ \vspace*{0.5cm} \today} 

\begin{document}

\bgroup
\let\footnoterule\relax

\begin{singlespace}
\maketitle

\begin{abstract}
    \noindent We study a model in which a sender allocates limited treatment to agents with heterogeneous quality and later recommends selected agents to a receiver, seeking to maximize the number of agents accepted by the receiver. All agents value treatment, which improves agents' quality, but treatment must be allocated before the sender observes agents' initial quality; recommendation occurs only after quality is learned. A natural benchmark is to design the two instruments separately: allocate treatment randomly first, and then recommend agents from the top down afterward. Our main result shows that the sender can do strictly better by coordinating treatment allocation with recommendations. In the optimal joint mechanism, treatment is non-monotone in quality: an intermediate group has a lower treatment probability than both higher- and lower-quality agents, but is compensated with a guaranteed recommendation when treatment is realized. We provide an implementation through contracts that induce self-selection and discuss applications to education, industrial policy, and startup incubation. 
\end{abstract}
\end{singlespace}
\thispagestyle{empty}

\clearpage
\egroup
\setcounter{page}{1}



\section{Introduction}

Many institutions both expend resources to improve agents' quality and recommend selected agents to outside evaluators. Schools provide students with educational resources and later recommend them to employers for hiring. Incubators support startups and later recommend selected startups to investors who decide whether to invest in them. We refer to such an institution as the sender. In these settings, the sender must make two choices: how to initially allocate limited treatment that improves the treated agents' quality, and how to later recommend agents to an outside receiver.

All agents value treatment, but it is scarce and must be allocated before the sender observes agents' private types. A school may allocate educational resources before knowing which students will become strong candidates. An incubator may allocate support before knowing which startups have real potential. Only after treatment is allocated does the sender learn more: schools observe course performance, and incubators observe demand data. The sender can then recommend selected agents to the outside receiver.

In this paper, we focus on how a sender should allocate limited treatment and design recommendations to maximize the number of agents accepted by the receiver. To answer this question, we study a joint mechanism and information design problem that involves a sender, a receiver, and a continuum of agents with heterogeneous quality. The sender wants to maximize the mass of agents accepted by the receiver. The sender commits ex ante to how treatment and recommendation depend on agents' reports and subsequent information. First, before observing agents' true types, treatment is allocated to a limited mass of agents. Treatment improves treated agents' quality. Second, after treatment is realized and agents' types are observed, the committed rule determines which agents are recommended. The receiver observes only the sender's recommendation and accepts recommended agents only if their expected quality is sufficiently high. Agents value both treatment and acceptance, but value acceptance more.

A natural benchmark is to design the two instruments separately. The sender first allocates treatment, and later, after observing agents' quality, decides whom to recommend. Since the sender cannot yet identify quality and all agents want treatment, treatment is assigned at random and each agent receives treatment with the same probability. After treatment is realized and quality is observed, the sender recommends agents in descending order of post-treatment quality. The recommendation cutoff is chosen so that the average quality of recommended agents is just high enough for the receiver to accept them.

Our main result shows that the sender can do strictly better by coordinating treatment allocation with the subsequent post-treatment recommendations. The optimal joint mechanism targets a group of intermediate-quality agents. These agents receive a lower probability of treatment than others, but are compensated with a recommendation as long as they are treated. In this way, the sender reduces the amount of treatment spent on this group while preserving their incentives to report truthfully. Consequently, some of the saved treatment opportunities go to high-quality agents, who are more likely to be recommended, thereby improving the overall quality of recommended agents.

The treatment allocation in the optimal joint mechanism is non-monotone. Treatment is useful to the sender only when it reaches agents who are eventually recommended. The sender therefore wants to reserve treatment for high-quality agents, who are more likely to enter the recommended pool. The mechanism instead offers a special arrangement to an intermediate group: these agents receive a lower treatment probability and are compensated with treatment-contingent recommendation. This saves capacity and redirects treatment toward higher-quality agents. At the same time, the sender does not want the lowest-quality agents to choose this arrangement, because recommending them would lower the average quality of the recommended pool too much. To separate them from the intermediate group, the mechanism gives the lowest-quality agents a higher treatment probability than the intermediate group, while giving them no chance of recommendation.

In standard information design, a recommendation is mainly a signal sent to the receiver. Here, recommendation also serves as an incentive device for agents because the sender can commit to the recommendation rule before treatment is allocated. The same instrument therefore performs two tasks: it persuades the receiver after quality is learned, and it screens agents before treatment is assigned.

We provide a simple implementation of the optimal joint mechanism using a menu with two contracts that induces self-selection. Accepting the non-default contract means receiving a lower probability of treatment, but facing a more lenient standard for recommendation when treated. In equilibrium, only the targeted group accepts the non-default contract; all other agents choose the default option and face the regular treatment and recommendation rule. 

A natural application is education. The school can implement the optimal mechanism by offering two tracks, such as a regular track and an honors track. The honors track may provide better access to scarce educational resources, such as advanced courses or research opportunities, but it may also impose more demanding standards.\footnote{Similar track structures already exist in practice. For example, some economics master's programs offer a research-intensive track that gives students greater access to PhD-level coursework while also imposing higher requirements.} We also discuss applications to industrial policy and startup incubation.

\subsection*{Related Literature}
Our paper builds on Bayesian persuasion and information design. In this literature, a sender designs signals to influence a receiver's beliefs and actions \citep{RayoSegal2010, KamenicaGentzkow2011, BergemannMorris2016, Kolotilin2018, BergemannMorris2019, Kamenica2019}. Our paper has the same persuasion motive: the sender uses information to induce acceptance by a receiver. The difference is that the payoff-relevant state is endogenous: the quality distribution entering the recommendation problem depends on the sender's treatment-allocation rule.

The treatment-allocation part of the model relates to mechanism design and screening. In this literature, a designer allocates goods or contracts among privately informed agents subject to incentive constraints \citep{MussaRosen1978, Myerson1981, BaronMyerson1982, MaskinRiley1984, LaffontMartimort2002}. Our model is especially related to mechanism design without monetary transfers \citep{MartimortSemenov2006, SchummerVohra2007, Miralles2012, ben2014}. In our model, the only instrument through which the sender can transfer utility to agents after treatment allocation is the later recommendation. At the same time, the sender's objective is to use this recommendation to induce receiver acceptance. This dual role of recommendation is the key difference: the same recommendation rule must provide incentives to agents before treatment is allocated and remain credible to the receiver afterward.

The closest strand of literature studies the joint design of information and mechanisms \citep{BergemannPesendorfer2007, kolotilin2017persuasion, BergemannHeumannMorris2026a}. These papers share with ours the idea that information and mechanisms should be designed jointly. In much of this literature, information design and mechanism design are directed at the same side of the market. Our environment is three-sided. The sender designs a treatment-allocation mechanism for agents, but designs information for a separate receiver. The recommendation connects the two sides: it is the sender's only instrument for transferring utility to agents, and it is also the information device used to induce receiver acceptance. Two closely related papers are \citet{chopra2025incentive} and \citet{Dworczak2020}. \citet{chopra2025incentive} study a three-sided setting in which a profit-seeking intermediary designs and sells information products, such as experiments, tests, or certificates, to privately informed senders, who then use the purchased information to persuade receivers. \citet{Dworczak2020} studies a setting in which a designer first runs a payoff-relevant mechanism and then chooses whether and how to disclose information elicited by that mechanism to a third party. The key difference is the role of the mechanism-design stage. In these papers, the mechanism-design stage either generates revenue from selling information or directly determines a payoff-relevant allocation. In our model, treatment allocation is not itself the sender's payoff-relevant objective. Instead, the mechanism allocates scarce treatment to target agents, allowing the sender to recommend more agents to the receiver.

Our paper is also related to work on certification, ratings, grading, testing, and information as an incentive device. Classic work on certification and quality disclosure studies how intermediaries or firms disclose quality information to markets \citep{Lizzeri1999, DranoveJin2010}. More recent work studies how certification, ratings, grades, or tests can induce agents to take costly actions that improve quality, such as effort, investment, or input choices \citep{zubrickas2015optimal, BoleslavskyKim2018, Zapechelnyuk2020, Xiao2024, CamboniCarnehl2025, SaeediShourideh2026}. In these papers, the analogue of treatment is typically costly for agents: it lowers their utility but raises the quality later revealed or rewarded by certification, ratings, grades, or tests. We study a complementary case in which treatment also raises quality but is valued by agents. The sender therefore does not use recommendation to induce agents to take a costly action; it uses recommendation to screen agents and improve the allocation of a scarce treatment.

The rest of the paper is organized as follows. \Cref{sec:model} sets up the model. \Cref{sec:optimal-direct} characterizes the optimal joint mechanism, develops the main intuition, and provides an implementation of the mechanism via contracts that induce self-selection. \Cref{sec:applications} discusses three applications: education, industrial policy, and startup incubation. \Cref{sec:conclusion} concludes.

\section{Model}
\label{sec:model}

\subsection{Environment}

There is a sender, a receiver, and a unit mass of agents. The sender has two instruments, both specified ex ante by a committed mechanism: an allocation rule that distributes scarce treatment, which improves treated agents' quality, and a recommendation rule that generates signals to the receiver. The sender's objective is to maximize the mass of agents accepted by the receiver. 

Each agent has a type \(\theta\in\Theta=[0,1]\), drawn from a distribution \(F\) with positive density \(f\). The type \(\theta\) represents the agent's initial quality. Agents privately observe their types before treatment is assigned. The sender does not observe agents' types at the treatment-allocation stage, but observes them before recommendation signals are sent to the receiver. The receiver observes neither agents' types nor their treatment statuses.

Treatment is binary. Let \(d\in\{0,1\}\) denote an agent's treatment status. If a type-\(\theta\) agent is treated, his quality becomes \(\theta+v\); otherwise, his quality remains \(\theta\). The constant \(v>0\) is the treatment effect. Equivalently, final quality is \(\theta+vd\). Treatment capacity is limited: the sender can treat at most a mass \(\kappa\in(0,1)\) of agents.

The sender commits ex ante to a treatment-allocation and recommendation mechanism. Let \(\mathcal M\) denote the message space from agents to the mechanism, and let \(\mathcal S\) denote the recommendation-signal space from the sender to the receiver. Given \(\mathcal M\) and \(\mathcal S\), a mechanism is denoted by $\pi=(p,q)$.

The first component is a treatment-allocation rule
\[
    p:\mathcal M\to\Delta\{0,1\},
\]
where \(p(d\mid m)\) is the probability that an agent receives treatment $d$ after sending message \(m\in\mathcal M\). Since treatment decision is binary, we write $p(m)=p(1\mid m)$. The second component is a recommendation rule
\[
    q:\mathcal M\times\Theta\times\{0,1\}\to\Delta(\mathcal S),
\]
where \(q(s\mid m,\theta,d)\) is the probability that recommendation signal \(s\in\mathcal S\) is sent to the receiver after the agent has sent message \(m\), the sender has observed the agent's type \(\theta\), and treatment status \(d\) has been realized.

The receiver observes only the sender's recommendation signal and decides whether to accept the agent. A signal can induce acceptance only if the posterior expected final quality conditional on that signal is sufficiently high. That is, after observing signal $s$, the receiver accepts only if
\[
\mathbb E_\pi[\theta+vd\mid s]\geq \tau,
\]
where the expectation is taken under the distribution induced by the mechanism and agents' messages. The sender's payoff is the mass of agents accepted by the receiver.

Agents value both treatment and acceptance. The payoff from being accepted is normalized to \(1\). Each agent obtains utility \(u\) from receiving treatment, where \(0<u<1\). Thus, acceptance is more valuable than treatment, but all agents strictly prefer treatment to no treatment.

The timing is as follows. First, the sender commits to a mechanism \(\pi=(p,q)\). Second, each agent observes his type and sends a message \(m\in\mathcal M\). Third, treatment is assigned according to the treatment-allocation rule \(p\), and treatment status is realized. Fourth, the agent's true type and treatment status are observed by the sender, and a recommendation signal is sent to the receiver according to the recommendation rule \(q(\cdot\mid m,\theta,d)\). Finally, the receiver decides whether to accept the agent.

\subsection{Direct Mechanisms}

By the revelation principle, we can restrict attention to direct mechanisms. In a direct mechanism, the agent-message space is \(\mathcal M=\Theta\). Each agent reports a type \(\hat{\theta}\in\Theta\), and truth-telling requires \(\hat{\theta}=\theta\) on the equilibrium path.

A direct mechanism is denoted by \(\pi=(p,q)\). The first component, \(p\), is a treatment-allocation rule,
\[
    p:\Theta\to[0,1],
\]
where \(p(\hat{\theta})\) is the probability that an agent receives treatment after reporting \(\hat{\theta}\).

The second component, \(q\), is a recommendation rule,
\[
    q:\Theta\times\Theta\times\{0,1\}\to \Delta\mathcal S,
\]
where \(q(s\mid \hat{\theta},\theta,d)\) is the probability that recommendation signal \(s\in\mathcal S\) is sent to the receiver after the agent has reported \(\hat{\theta}\), the sender has observed the agent's true type \(\theta\), and treatment status \(d\) has been realized.

Since the receiver's action is binary, we restrict attention without loss to a binary recommendation-signal space, \(\mathcal S=\{0,1\}\). We interpret \(s=1\) as a recommendation and \(s=0\) as no recommendation. With binary recommendation signals, we write
\[
    q(\hat{\theta},\theta,d)
    :=
    q(1\mid \hat{\theta},\theta,d)
\]
for the probability of recommendation. A recommendation is intended to induce acceptance by the receiver, subject to the obedience constraint below.

\subsection{Receiver Obedience and Agent Incentives}

Under truth-telling $\hat{\theta}=\theta$, the posterior mean quality of a recommended agent is
\begin{equation*}
    \mu_\pi
    :=\mathbb E_\pi[\theta+vd\mid s=1] =
    \frac{
    \displaystyle\int_{\Theta}
    [
        \overbrace{
        p(\theta)q(\theta,\theta,1)\bigl(\theta+v\bigr)
        }^{\text{treated quality}}
        +
        \overbrace{
        \bigl(1-p(\theta)\bigr)q(\theta,\theta,0) \ \theta
        }^{\text{untreated quality}}
    ]\D F(\theta)
    }{
    \underbrace{
    \int_{\Theta}
    \left[
        p(\theta)q(\theta,\theta,1)
        +
        \bigl(1-p(\theta)\bigr)q(\theta,\theta,0)
    \right]\D F(\theta)
    }_{\text{mass of recommended agents}}
    },
\end{equation*}
whenever the denominator is positive. This expression integrates out treatment status: the receiver does not observe whether a recommended agent was treated, but infers the composition of the recommended pool from the sender's committed rules.

Receiver obedience requires
\begin{equation*}
    \mu_\pi\geq \tau.
\end{equation*}

Given the agents' preferences, if a type-\(\theta\) agent reports \(\hat{\theta}\), his expected payoff is
\begin{equation*}
    U_\pi(\theta,\hat{\theta})
    =
    \underbrace{
    p(\hat{\theta})
    \left[
        u+q(\hat{\theta},\theta,1)
    \right]
    }_{\text{payoff when treated}}
    +
    \underbrace{
    \bigl(1-p(\hat{\theta})\bigr)
    q(\hat{\theta},\theta,0)
    }_{\text{payoff when untreated}} .
\end{equation*}
With probability \(p(\hat{\theta})\), the agent receives treatment and obtains treatment utility \(u\); conditional on being treated, he is recommended with probability \(q(\hat{\theta},\theta,1)\). With probability \(1-p(\hat{\theta})\), the agent is untreated and obtains only the acceptance payoff, which occurs with probability \(q(\hat{\theta},\theta,0)\).

Truthful reporting requires
\begin{equation*}
    U_\pi(\theta,\theta)\geq U_\pi(\theta,\hat{\theta})
    ,\quad
    \forall \theta,\hat{\theta}\in\Theta .
\end{equation*}

\subsection{Sender's problem}

The sender chooses a feasible direct mechanism to maximize the mass of recommended agents. Under truth-telling, the sender solves
\begin{equation}
\label{eq:sender-primal}
\begin{aligned}
    \max_{p,q}\quad
    &
    \int_{\Theta}
    \left[
        p(\theta)q(\theta,\theta,1)
        +
        \bigl(1-p(\theta)\bigr)q(\theta,\theta,0)
    \right]\D F(\theta), \\
    \text{s.t.}\quad
    & \int_{\Theta} p(\theta)\,\D F(\theta)\leq \kappa,
    && \text{(Capacity)} \\
    & \mu_\pi\geq \tau,
    && \text{(Obedience)} \\
    & U_\pi(\theta,\theta)\geq U_\pi(\theta,\hat{\theta}),
    \quad \forall \theta,\hat{\theta}\in\Theta.
    && \text{(IC)}
\end{aligned}
\end{equation}
The model combines a mechanism-design problem on the agent side with an information-design problem on the receiver side. Treatment is allocated before the sender observes agents' types, so the allocation rule must respect agents' reporting incentives. Recommendation determines the receiver's posterior belief, so the recommendation rule must satisfy receiver obedience.

\section{Optimal Direct Mechanism}
\label{sec:optimal-direct}

In this section, we characterize the optimal direct mechanism. We focus on the nondegenerate interior case in which recommending everyone violates the obedience constraint, i.e.
\[
    \utt>\int_\Theta \theta\,\D F(\theta)+v\kappa.
\]

Our main result is that the optimal direct mechanism takes a four-segment form.

\begin{theorem}
\label{thm:optimal-direct}
In the nondegenerate interior case, an optimal direct mechanism either satisfies $q(\hat\theta,\theta,d)=0$ almost everywhere (recommends no one), or takes the following form:
\begin{equation*}
    \bigl(p(\theta),q(\theta,\theta,1),q(\theta,\theta,0)\bigr)
    =
    \begin{cases}
    (P,1,1),
    & \theta\in(\theta_3,1],\\[0.4em]
    (P,1,0),
    & \theta\in(\theta_2,\theta_3],\\[0.4em]
    \left(\dfrac{u}{1+u}P,1,0\right),
    & \theta\in(\theta_1,\theta_2],\\[0.9em]
    (P,0,0),
    & \theta\in[0,\theta_1],
    \end{cases}
\end{equation*}
where \(\kappa<P\leq 1\) and $0\leq \theta_1<\theta_2\leq \theta_3\leq 1$ are cutoffs,
and
\begin{equation*}
    q(\hat{\theta},\theta,d)=0,
    \quad
    \text{for all }\hat{\theta}\neq\theta
    \text{ and }d\in\{0,1\}.
\end{equation*}
\end{theorem}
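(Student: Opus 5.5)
We reduce the problem in three stages and then solve a finite-dimensional program over cutoffs.

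\textbf{Step 1 (off-path punishments and the IC reduction).} Setting $q(\hat\theta,\theta,d)=0$ for $\hat\theta\neq\theta$ is without loss. It leaves the objective and obedience unchanged, since both depend only on on-path values, and it weakly lowers every deviation payoff. With this choice, a type-$\theta$ agent who reports $\hat\theta\neq\theta$ obtains only $p(\hat\theta)u$. IC therefore reduces to
\[
p(\theta)\bigl[u+q_1(\theta)\bigr]+\bigl(1-p(\theta)\bigr)q_0(\theta)\;\ge\; u\,\bar P,\qquad \bar P:=\sup_{\hat\theta}p(\hat\theta),
\]
where $q_d(\theta)=q(\theta,\theta,d)$. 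Up to null sets, $\bar P$ is the essential supremum. Thus the whole mechanism is summarized by a scalar $P:=\bar P$ and pointwise choices $(p,q_1,q_0)$ with $p\le P$ that satisfy this participation-like constraint.

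\textbf{Step 2 (linearization and pointwise optimization).} The obedience constraint is equivalent to the linear inequality
\[
\int\bigl[pq_1(\theta+v-\tau)+(1-p)q_0(\theta-\tau)\bigr]\,dF\ge 0.
\]
Fix $P$ and attach multipliers $\lambda\ge0$ to obedience and $\eta\ge0$ to capacity. The Lagrangian is then pointwise in $\theta$. Using $x=pq_1$ and $y=(1-p)q_0$, it reads
\[
x\bigl(1+\lambda(\theta+v-\tau)\bigr)+y\bigl(1+\lambda(\theta-\tau)\bigr)-\eta p.
\]
It must be maximized subject to $0\le x\le p\le P$, $0\le y\le 1-p$, and the IC constraint $pu+x+y\ge uP$. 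The coefficients on $x$ and $y$ are increasing in $\theta$, which gives a cutoff structure:
\begin{itemize}
\item For high $\theta$, both coefficients are positive. The solution is $p=P$ (treatment is valued at least as much as it costs in this region, or IC forces it) with $x=P$ and $y=1-P$, giving the profile $(P,1,1)$.
\item At an intermediate range, only the $x$-coefficient is positive, giving $(P,1,0)$.
\item Lower down, recommending is costly, yet $p=P$ with $x=0$ is expensive in capacity. IC lets the sender cut $p$ to the point where $p(u+1)=uP$ with $q_1=1$, i.e.\ $p=\tfrac{u}{1+u}P$. This trades a smaller recommendation loss against a capacity saving of $\eta P/(1+u)$.
\item For the lowest types, the recommendation loss dominates, so $x=y=0$ and IC forces $p=P$, giving $(P,0,0)$.
\end{itemize}
For each type one compares these finitely many extreme points, since the feasible set is a polytope. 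Monotonicity of the coefficients in $\theta$ then yields the ordering $\theta_1<\theta_2\le\theta_3$ of the thresholds at which the comparisons switch.

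\textbf{Step 3 (ruling out other extreme points, $P>\kappa$, and the degenerate case).} The remaining candidate vertices are $p=0$ and an interior $p$ with $y>0$. Setting $p=0$ requires $y\ge uP$, i.e.\ untreated recommendation, which is dominated by the $(P,1,1)$ and $(P,1,0)$ options at high types and infeasible or dominated at low types. Showing this dominance requires a careful comparison of the $\eta$ and $\lambda$ terms.

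The bound $P>\kappa$ follows because capacity binds. Every type receives $p\ge\tfrac{u}{1+u}P$, with $p=P$ on a positive-mass set, and $\int p\,dF\le\kappa$. If $p\equiv P$, then $P\le\kappa$, and the intermediate group $(\theta_1,\theta_2]$ would be empty. The plan is to show that introducing such a group is a strict improvement, which would contradict optimality; this is presumably where the nondegenerate assumption enters.

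If $\lambda$ is so large that no type has a positive $x$-coefficient, we obtain the recommend-no-one solution. Finally, existence of a maximizer follows from compactness: the cutoffs lie in $[0,1]^3$ and $P\in[0,1]$. Strong duality for the linear program, taken for each fixed $P$, justifies the multipliers.

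\textbf{Main obstacle.} The delicate step is the pointwise analysis in Step 2. One must show that the only optimal vertices are the four listed profiles, appearing in that order and with the stated form at the intermediate region (in particular, that the candidates $p=0$ and $y>0$ with $p<P$ never survive). The second delicate step is establishing that $P>\kappa$ strictly rather than merely $P\le1$.
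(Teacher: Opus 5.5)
\textbf{Gap 1: the $p=0$ vertices.} Your Step 1 and the $(x,y)$ linearization of obedience are sound; they are equivalent to the paper's reduction to $(p,Q)$ with $\min\{p,Q\}$ in the obedience constraint. The genuine gap is in Step 3: the $p=0$ vertices cannot be removed by a pointwise comparison of multiplier terms.

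In your notation, $(P,1,1)$ has pointwise value $1+\lambda(\theta-\tau)+P(\lambda v-\eta)$. The vertex $p=0$, $y=1$ has value $1+\lambda(\theta-\tau)$. Both satisfy IC because $1\ge uP$, so ``IC forces it'' is false.

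Hence whenever $\eta>\lambda v$, untreated unconditional recommendation strictly beats $(P,1,1)$ at every high type. This is the paper's $\Lambda>v$, with the roles of $\lambda$ and $\eta$ swapped. Likewise, $p=0$, $y=uP$ can beat both $(P,1,0)$ and the target profile on an intermediate range. For a fixed $P$, such multipliers do support the optimum; this happens when $P$ is chosen too large. So your Step 2 ordering is valid only after you have shown that $\eta\le\lambda v$ can be assumed, and no ``careful comparison of the $\eta$ and $\lambda$ terms'' delivers that.

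The missing idea is a comparison across $P$. Take a fixed-$P$ optimum of this second class and replace $p$ by the constant $\kappa$, keeping $Q=x+y$ fixed. Then:
\begin{itemize}
\item IC holds trivially, since the supremum of $p$ is now $\kappa$ and $Q\ge 0$.
\item Capacity holds, and the objective is unchanged.
\item The treated-and-recommended mass $\int\min\{p,Q\}\,dF$ strictly rises. When $\kappa>\frac{u}{1+u}P$ this holds pointwise. Otherwise it follows from the capacity constraint, which bounds $P$ times the bottom group's mass plus $\frac{u}{1+u}P$ times the target group's mass by $\kappa$.
\end{itemize}
So obedience becomes slack, and the optimum at $P=\kappa$ strictly dominates.

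\textbf{Gap 2: $P>\kappa$.} You state this only as a plan. With binding capacity, $P>\kappa$ is equivalent to the target group having positive mass. So you must show that the uniform mechanism $p\equiv\kappa$, $Q\in\{1,\kappa,0\}$ (the separated benchmark) is never optimal once anyone is recommended.

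The paper does this with an envelope argument in $P$. At $P=\kappa$, the multiplier $\phi(\theta)$ on $p(\theta)\le P$ is strictly positive wherever $Q(\theta)>0$, which is a set of positive mass. On never-recommended types, the multipliers $\nu$ on $Q+pu\ge Pu$ and $\phi$ on $p\le P$ satisfy $\phi=u\nu$ and cancel. Hence the right derivative of the fixed-$P$ value at $P=\kappa$ is proportional to $\int(\phi-u\nu)\,dF>0$.

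Your perturbation idea can also be made to work, but only by doing this computation explicitly. The saved capacity must be spread as a uniform increase in $P$, since IC ties every bottom type's treatment to $P$. You must then show that the first-order gain on recommended types outweighs the obedience cost of the new target group.

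\textbf{Minor point.} Your compactness argument for existence presupposes the cutoff form you are trying to derive, so it cannot be used as stated.
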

The mechanism divides agents into four groups. The top group receives the high treatment probability \(P\) and is recommended regardless of treatment status. The upper-middle group also receives the high treatment probability \(P\) and is recommended only if treated. The lower-middle group is the target group: these agents receive a lower treatment probability, \(\frac{u}{1+u}P\), and are recommended conditional on being treated. The bottom group receives the high treatment probability \(P\) but is never recommended.

To understand the intuition behind \Cref{thm:optimal-direct}, it is useful to compare the optimal mechanism with a separated-design benchmark.

\subsection{Separated-Design Benchmark}
\label{subsec:separate}
In this benchmark, treatment allocation and recommendation are designed separately. At the treatment-allocation stage, the sender only chooses which agents receive treatment. After treatment is realized and types are observed, the sender publicly commits to a recommendation rule that does not depend on any communication with agents in the treatment-allocation stage. The solution to this recommendation problem is characterized by a cutoff \(\theta^S\), and the benchmark mechanism is
\begin{equation*}
\begin{aligned}
    \bigl(p^S(\theta),q^S(\theta,\theta,1),q^S(\theta,\theta,0)\bigr)
    =
    \begin{cases}
    \bigl(P^S,1,1\bigr),
    & \theta^S<\theta\leq 1,\\[0.4em]
    \bigl(P^S,1,0\bigr),
    & \theta^S-v<\theta\leq\theta^S,\\[0.4em]
    \bigl(P^S,0,0\bigr),
    & 0 \leq \theta\leq\theta^S-v,
    \end{cases}
    \qquad
    P^S=\kappa .
\end{aligned}
\tag{S}
\label{eq:separated-design}
\end{equation*}
The cutoff \(\theta^S\) is chosen so that the receiver's obedience constraint binds.

The cutoff form has a simple interpretation. Since treatment allocation is designed separately from recommendation and all agents value treatment, before types are observed, the sender has no way to direct treatment toward particular types. Treatment is therefore allocated uniformly: \(P^S=\kappa\).

Once treatment is realized and types are observed, the sender faces a standard recommendation problem. Since treatment raises quality by \(v\), a treated agent of type \(\theta\) has quality \(\theta+v\), while an untreated agent has quality \(\theta\). The sender therefore recommends agents from the top down in realized quality. Untreated agents are recommended only if \(\theta>\theta^S\), while treated agents are recommended if \(\theta+v>\theta^S\), or equivalently \(\theta>\theta^S-v\). Thus high types are recommended regardless of treatment status, intermediate types are recommended only if treated, and low types are never recommended.

The limitation of the separated benchmark is that treatment cannot be targeted. Treatment helps the sender only when it is assigned to agents who are eventually recommended. If a treated agent is not recommended, the treatment uses capacity but does not improve the quality of the recommended pool. However, in the separated benchmark, treatment is forced to be assigned randomly, so some treatment is spent on agents who are too low-quality to be recommended even after treatment.

The sender would instead like treatment to reach agents with high initial quality, since these agents are more likely to enter the recommended pool. Treating such agents raises the quality of the recommended pool and relaxes the receiver's obedience constraint, which allows the sender to recommend more agents. The problem is that, at the treatment-allocation stage, the sender cannot distinguish these agents from lower-quality agents. The optimal joint mechanism in \Cref{thm:optimal-direct} improves on the separated benchmark by using the future recommendation rule to create incentives that help direct treatment toward the relevant types.

\subsection{Intuition for \topref{Theorem }{thm:optimal-direct}}
\label{subsec:intuition-theorem1}

The optimal joint mechanism differs from the separated benchmark by introducing a target group. In the separated benchmark, low-type agents are never recommended, regardless of treatment status, but they still receive the same treatment probability as all other agents. The optimal joint mechanism instead selects some relatively low types and gives them a lower treatment probability, while compensating them with a recommendation when treatment is realized. This is the lower-middle group,
\begin{equation*}
    \theta\in(\theta_1,\theta_2],
    \qquad
    \bigl(p(\theta),q(\theta,\theta,1),q(\theta,\theta,0)\bigr)
    =
    \left(\frac{u}{1+u}P,1,0\right).
\end{equation*}
These agents receive a lower treatment probability than the other groups. They are not recommended when untreated, and recommended when treated.

The purpose of this design is to save treatment capacity. The target group consists of relatively low types, so the sender would rather reserve treatment for higher types. However, if the target group simply received a lower treatment probability, these agents would prefer to mimic types who receive treatment with probability \(P\).

The mechanism prevents this deviation by attaching recommendation to treatment. Conditional on receiving treatment, a target agent obtains both the treatment payoff \(u\) and the acceptance payoff \(1\). Hence his truthful payoff is
\begin{equation*}
    \frac{u}{1+u}P(1+u)=Pu,
\end{equation*}
which is exactly the payoff from mimicking a type who receives treatment with probability \(P\) but is not recommended. Thus the target group is willing to report truthfully, while the sender reduces the treatment probability spent on them from \(P\) to \(\frac{u}{1+u}P\).

This compensation is not free. Recommending lower-quality agents lowers the average quality of the recommended pool and tightens the receiver's obedience constraint. Hence the sender faces a trade-off. Reducing treatment probability saves scarce capacity, but compensating agents through recommendation makes the obedience constraint harder to satisfy.

This explains the non-monotone structure of the mechanism. The target group is neither the highest group nor the lowest group. It is not optimal to reduce treatment probability for the highest types because the sender wants these agents to receive treatment: they are most valuable for the receiver's obedience constraint. It is also not optimal to target the lowest types because compensating them through recommendation is too costly. Their quality is too low, so recommending them would sharply reduce the posterior mean quality of recommended agents. For the lowest types, the mechanism instead gives the high treatment probability \(P\) but no recommendation. The target group is therefore an intermediate group.

The preceding discussion shows why treatment allocation and recommendation must be designed jointly. The key limitation of the separated design is that it uses the sender's commitment power only toward the receiver, not toward agents. Joint design also uses the sender's commitment power toward agents. By committing ex ante to treatment-contingent recommendation rules, the sender can use future recommendation to screen agents before treatment is allocated. This allows the sender to redirect treatment capacity toward more valuable types. \Cref{cor:joint-improves-separated} shows that this joint design strictly improves on the separated-design benchmark.

\begin{corollary}
\label{cor:joint-improves-separated}
In the nondegenerate interior case, the optimal joint mechanism characterized in \Cref{thm:optimal-direct} achieves a strictly larger mass of accepted agents than the separated-design benchmark in \eqref{eq:separated-design}.
\end{corollary}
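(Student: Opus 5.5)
Since the optimal joint mechanism maximizes over all feasible direct mechanisms, and the separated benchmark \eqref{eq:separated-design} is itself feasible, it suffices to exhibit one feasible direct mechanism that strictly beats \eqref{eq:separated-design}. The separated mechanism is the four-segment form of \Cref{thm:optimal-direct} with $\theta_1=\theta_2=\theta^S-v$, $\theta_3=\theta^S$, $P=\kappa$. The plan is to perturb it by opening a thin target group and spending the saved capacity on a uniform increase of $P$, then show that the obedience slack created allows a strictly larger recommended mass.

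\textbf{Step 1 (benchmark is feasible, with binding obedience).} I first record that \eqref{eq:separated-design} satisfies (Capacity), (IC) and (Obedience). IC holds because $p$ is constant, recommendation depends only on true $(\theta,d)$, and misreports give no recommendation. In the interior case $\tau>\mathbb E[\theta]+v\kappa$, so $\theta^S-v\in(0,1)$ and $\mu_{\pi^S}=\tau$. I assume $\theta^S-v>0$ and that $\theta^S\le 1$ holds in the relevant case; the corner $\theta^S>1$, where the treated-only group reaches the top, is handled the same way.

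\textbf{Step 2 (perturbation).} Fix small $\varepsilon>0$ and define the target interval $(\theta_1,\theta_2]=(\theta^S-v-\varepsilon,\theta^S-v]$. I give it treatment probability $\frac{u}{1+u}P$ with recommendation iff treated. The bottom group $[0,\theta_1]$ gets $(P,0,0)$. The groups above keep their benchmark form with common probability $P$. Capacity then binds with
\[
P=\frac{\kappa}{1-\frac{1}{1+u}\Delta F},\qquad \Delta F:=F(\theta_2)-F(\theta_1),
\]
so $P>\kappa$. IC follows from the computation in \Cref{subsec:intuition-theorem1}:
\begin{itemize}
\item a target type gets exactly $Pu$, which equals its payoff from any misreport, since misreports are never recommended;
\item every other type gets at least $Pu$ and at most $P(1+u)$ or $P+\dots$ truthfully, versus $Pu$ or $\frac{u}{1+u}P\cdot u$ from misreporting.
\end{itemize}
So the mechanism is feasible apart from obedience.

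\textbf{Step 3 (obedience, then strict gain).} Write the obedience condition as $\int (\text{quality}-\tau)\,\D(\text{recommended mass})\ge 0$. Relative to the benchmark, there are two effects.
\begin{itemize}
\item \emph{Gain.} Raising $P$ from $\kappa$ to $P$ on the groups $(\theta^S-v,1]$ adds treated quality there. On the top group this adds $v(P-\kappa)$ per unit of mass. On the middle group it adds recommended treated agents of quality $\theta+v>\theta^S$. To first order in $\varepsilon$ the gain is proportional to
\[
(P-\kappa)\Big[vF\text{-mass of top}+\int_{\theta^S-v}^{\theta^S}(\theta+v-\tau)\,\D F\Big]\approx \kappa\frac{\Delta F}{1+u}\,C .
\]
\item \emph{Loss.} The newly recommended target agents have quality $\approx\theta^S$, which lies strictly above $\tau$'s marginal requirement. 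Indeed, obedience binding at cutoff $\theta^S$ with a recommended pool strictly above the cutoff means $\tau>\theta^S$. The loss is therefore of order $\Delta F\cdot\frac{u}{1+u}\kappa(\tau-\theta^S)$.
\end{itemize}
If these combined effects leave positive slack, I lower the cutoff $\theta_3$ (or $\theta_2$) slightly to recommend strictly more mass while keeping obedience. I also compare recommended mass directly: the target group's recommended mass $\frac{u}{1+u}P\Delta F$ replaces zero, and the increase in $P$ raises the middle group's recommended mass.

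\textbf{Main obstacle.} The gain/loss comparison is not obviously signed, because $C$ can be negative when the middle group is recommended at quality below $\tau$. The cleaner route, which I would try first, treats both effects as first-order in $\Delta F$. I compute the derivative of the maximal recommended mass, subject to binding obedience, with respect to $\varepsilon$ at $\varepsilon=0$ by an envelope/Lagrangian argument. The Lagrange multiplier $\lambda$ on obedience values a unit of mass with quality $x$ at $1+\lambda(x-\tau)$, and at the benchmark optimum the marginal recommended type $\theta^S$ has value $1+\lambda(\theta^S-\tau)=0$. The target types (quality $\approx\theta^S$) therefore enter at zero marginal cost. The freed capacity $\frac{\kappa}{1+u}\Delta F$, reallocated proportionally, has strictly positive value: it raises recommended quality of recommended types by $v$, and the average of $1+\lambda(\theta+v-\tau)$ over them exceeds $0$ because all of them are at or above the marginal type, strictly on a positive-measure set. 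Hence the derivative is strictly positive, which gives strict improvement for small $\varepsilon$.
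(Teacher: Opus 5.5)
\textbf{Gap: the case $\theta^S\le v$ is not covered.} In Step 1 you assert that the interior condition $\tau>\mathbb E[\theta]+v\kappa$ gives $\theta^S-v\in(0,1)$. The rest of your proof depends on this, and it is false. Take $F$ uniform, $v=1$, $\kappa=1/2$, $\tau=1.05$. The interior condition holds, and the binding cutoff is $\theta^S=0.1$. The pool is all treated agents (mass $1/2$, mean quality $3/2$) plus untreated types above $0.1$ (mass $0.45$, mean $0.55$), with average $0.9975/0.95=1.05$.

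Whenever $\theta^S\le v$, every treated agent is recommended and the bottom group is empty. Your target interval $(\theta^S-v-\varepsilon,\theta^S-v]$ then has zero $F$-mass, so the perturbation is vacuous. Nor is this case ``handled the same way.'' Any target group must now be taken from types already recommended when treated, at quality $\theta+v>\theta^S$, i.e.\ strictly above the margin. Cutting their treatment probability from $\kappa$ to $\frac{u}{1+u}P$ destroys recommended mass at first order. So your key step (target types enter exactly at the margin, hence at zero first-order cost) has no analogue, and the sign of the net effect must be established separately.

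\textbf{How to close it.} Use your own primal method with the target group at $[0,\varepsilon]$, everything else as in Step 2. With $\lambda=1/(\tau-\theta^S)$, the first-order change in recommended mass plus $\lambda$ times obedience slack is
\begin{align*}
&\frac{\kappa F(\varepsilon)}{1+u}\Bigl[-\bigl(1+\lambda(v-\tau)\bigr)+\int_0^{\theta^S}\bigl(1+\lambda(\theta+v-\tau)\bigr)\,\D F(\theta)+\lambda v\bigl(1-F(\theta^S)\bigr)\Bigr]\\
&\qquad=\frac{\kappa F(\varepsilon)}{(1+u)(\tau-\theta^S)}\Bigl[\theta^S\bigl(1-F(\theta^S)\bigr)+\int_0^{\theta^S}\theta\,\D F(\theta)\Bigr].
\end{align*}
This is strictly positive since $\theta^S>0$. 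The first bracketed term is the loss on the target group; the rest is the value of the freed capacity. In the example with $u=1/2$ and $\varepsilon=0.01$, re-solving for the binding cutoff raises recommended mass from $0.95$ to about $0.9503$. In general the target group belongs at $\max\{\theta^S-v,0\}$, not always at $\theta^S-v$.

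\textbf{The case $\theta^S>v$.} Here your multiplier argument is sound and is the paper's proof in primal form. The marginal value of raising $P$ is $\lambda v$ on the top group and $1+\lambda(\theta+v-\tau)=\lambda(\theta+v-\theta^S)$ on the middle group. On the bottom group it is zero: capacity there is simply wasted, and those types are not ``at or above the marginal type,'' as you write. These values are exactly $\eta\,\phi(\theta)$ from the paper's proof of \Cref{prop:indirect-not-optimal}, with $\eta=\lambda$ and $\Eta=\theta^S$. Your target group's zero first-order contribution mirrors $\phi-u\nu=0$ on the paper's bottom region.

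The paper needs no case split because it differentiates the value of the fixed-$P$ problem at $P=\kappa^+$, letting the optimizer position the target group. Your route exhibits an explicit improving mechanism, but you must position the group by hand. Finally, your ``Loss'' bullet has the comparison backwards: quality $\theta^S$ lies below $\tau$, not above. The multiplier argument supersedes that bullet anyway.
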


\Cref{cor:joint-improves-separated} follows directly from the facts that, in \Cref{thm:optimal-direct}, $P$ is restricted to be strictly greater than $\kappa$, and that the separated-design benchmark corresponds to the same mechanism form as in \Cref{thm:optimal-direct}, with $P=\kappa$. We establish in \Cref{prop:indirect-not-optimal} in \Cref{app:prf-direct} that a mechanism with $P=\kappa$ cannot be optimal in nondegenerate interior cases.

\subsection{Implementation}
\label{subsec:implementation}

The optimal direct mechanism asks agents to report their types. In practice, such direct reports may be unrealistic. We now describe an indirect mechanism that weakly implements the same allocation. Agents do not report their types directly. Instead, they choose between two contracts offered by the sender: a default contract and a non-default contract. The mechanism is described below.

\begin{mechanismbox}{Indirect implementation}

The sender publicly commits to the following mechanism. The sender offers each agent a menu with two contracts: a default contract and a non-default contract. Each agent chooses one contract, and the choice is not observed by the receiver.
Under the non-default contract, the agent receives treatment with fixed probability \(\frac{u}{1+u}P\). The recommendation rule for agents who choose the non-default contract is
\begin{equation*}
    \bigl(q^{N}(\theta,1),q^{N}(\theta,0)\bigr)
    =
    \begin{cases}
    (1,1), & \theta\in(\theta_3,1],\\[0.4em]
    (1,0), & \theta\in(\theta_1,\theta_3],\\[0.4em]
    (0,0), & \theta\in[0,\theta_1].
    \end{cases}
\end{equation*}
Thus, for agents who choose the non-default contract, the recommendation threshold is \(\theta_1\) when treated and \(\theta_3\) when untreated.

Under the default contract, the agent receives treatment with fixed probability \(P\). Agents who choose the default contract face the regular recommendation rule:
\begin{equation*}
    \bigl(q^{D}(\theta,1),q^{D}(\theta,0)\bigr)
    =
    \begin{cases}
    (1,1), & \theta\in(\theta_3,1],\\[0.4em]
    (1,0), & \theta\in(\theta_2,\theta_3],\\[0.4em]
    (0,0), & \theta\in[0,\theta_2].
    \end{cases}
\end{equation*}
Thus, agents who choose the default contract face a higher recommendation threshold when treated: \(\theta_2>\theta_1\). The threshold when untreated remains \(\theta_3\).
\end{mechanismbox}

We show that this menu implements the optimal direct mechanism under tie-breaking in favor of the non-default contract. In the candidate equilibrium, agents in the target group, $\theta\in(\theta_1,\theta_2]$, choose the non-default contract, while all other agents choose the default contract. We verify that no type has a profitable deviation:

 Top-group agents, \(\theta\in(\theta_3,1]\), choose the default contract. Under the default contract, they receive treatment with probability \(P\) and are recommended regardless of treatment status. Under the non-default contract, they are still recommended regardless of treatment status, but receive treatment with the lower probability \(\frac{u}{1+u}P\). Hence they strictly prefer the default contract.

Upper-middle agents, \(\theta\in(\theta_2,\theta_3]\), also choose the default contract. Under the default contract, they receive treatment with probability \(P\) and are recommended whenever treated. Under the non-default contract, they are also recommended whenever treated, but receive treatment with the lower probability \(\frac{u}{1+u}P\). Hence they strictly prefer the default contract.

Bottom agents, \(\theta\in[0,\theta_1]\), choose the default contract as well. They are never recommended under either contract. Choosing the non-default contract only lowers their treatment probability, so they strictly prefer the default contract.

Finally, consider target agents, \(\theta\in(\theta_1,\theta_2]\). If a target agent chooses the default contract, he receives treatment with probability \(P\) but is never recommended, so his payoff is \(Pu\). If he chooses the non-default contract, he receives treatment with probability \(\frac{u}{1+u}P\), and treatment is bundled with recommendation. His payoff is
\begin{equation*}
    \frac{u}{1+u}P(1+u)=Pu.
\end{equation*}
Thus target agents are indifferent between the two contracts. With tie-breaking in favor of the non-default contract, all target agents choose it.

The induced allocation coincides with the optimal direct mechanism. Target agents choose the non-default contract, receive treatment with probability \(\frac{u}{1+u}P\), and are recommended when treated. All other agents choose the default contract, receive treatment with probability \(P\), and face the regular recommendation rule. Hence top agents are recommended regardless of treatment status, upper-middle agents are recommended only when treated, and bottom agents are never recommended.

This menu induces self-selection. Choosing the non-default contract means giving up treatment probability in exchange for a lower recommendation threshold when treated. Only the target group finds this trade-off weakly attractive, so the sender identifies them through their contract choice.

\subsection{A Proof Sketch of \topref{Theorem }{thm:optimal-direct}}

In this subsection, we present a proof sketch for our main result, \Cref{thm:optimal-direct}. A complete proof is provided in \Cref{app:prf-direct}.

The proof starts with the following two observations. First, the probability of recommendation is the only instrument that the sender can use to punish agents who are found to have misreported their types. Since the sender has full commitment power, it is optimal to commit to the harshest punishment for misreports by never recommending agents whose reported type is inconsistent with their true type, i.e., $q(\htt,\theta,d)=0$ for all $\htt\neq\theta$, $d\in\{0,1\}$. As a result, the only potential benefit for an agent from deviating from truthful reporting is a higher probability of treatment. The treatment allocation rule $p$ depends only on the reported type. Therefore, if an agent is to deviate, it is optimal for him to report a type that receives the highest probability of treatment. We denote this highest probability by $P$, where $P=\sup_{\theta\in\Theta}p(\theta)$.

The second observation is that, apart from treatment, agents care only about the ex-ante probability of being recommended. That is, conditional on this probability, it does not matter whether recommendation occurs when they are treated or untreated. We denote the ex-ante recommendation rule by $Q$, where $Q(\theta)$ is the ex-ante probability of a type-$\theta$ agent being recommended given truthful reporting, i.e., $Q(\theta)=p(\theta)q(\theta,\theta,1)+[1-p(\theta)]q(\theta,\theta,0)$. It is sufficient for the sender to design $(p,Q)$ as the incentive scheme for the agents. Moreover, the same agent has higher quality when treated than when untreated. Therefore, for any fixed ex-ante recommendation probability $Q(\theta)$, the sender prefers to assign recommendation probability to the treated state first. Only after the agent is recommended with probability one when treated does the sender assign positive recommendation probability to the untreated state.


With these two observations, it can be shown that the sender's problem \eqref{eq:sender-primal} is equivalent to the following problem:
\begin{equation*}
\begin{aligned}
    \max_{p,Q,P}\quad
    &\int_{\Theta}
        Q(\theta)\D F(\theta),
     \\
    \text{s.t.}\quad
    & \int_{\Theta} p(\theta)\,\D F(\theta)\leq \kappa,
    && \text{(Capacity)} \\
    & \int_\Theta [(\theta-\tau) Q(\theta)+v \min\{p(\theta),Q(\theta)\}]\D F(\theta)\geq 0,
    && \text{(Obedience)} \\
    & Q(\theta)+p(\theta) u\geq Pu,
    \quad \forall \theta\in\Theta,
    && \text{(IC$_1$)}\\
    & p(\theta)\leq P,
    \quad \forall \theta\in\Theta,
    && \text{(IC$_2$)}
\end{aligned}
\end{equation*}
The optimal recommendation rule $q$ can be recovered from $p$ and $Q$.

To solve this problem, we first fix the value of $P$ and optimize over $p$ and $Q$. When $P$ is fixed, the problem is a convex optimization problem satisfying Slater's condition. We therefore proceed with Lagrangian methods and show that given $P$, the solution takes either of the following two forms:
\begin{equation*}
    (p(\theta),Q(\theta))=\switch{
        &(P,1), &\theta_3<\theta&\leq1\\
        &(P,P), &\theta_2<\theta&\leq\theta_3\\
        &\textstyle\left(\frac u{1+u}P,\frac u{1+u}P\right), &\theta_1<\theta&\leq\theta_2\\
        &(P,0), &0\leq\theta&\leq\theta_1
    }\text{ or }\switch{
        &(0,1), &\theta_3<\theta&\leq1\\
        &(0,uP), &\tilde\theta_2<\theta&\leq\theta_3\\
        &\textstyle\left(\frac u{1+u}P,\frac u{1+u}P\right), &\theta_1<\theta&\leq\tilde\theta_2\\
        &(P,0), &0\leq\theta&\leq\theta_1
    },
\end{equation*}
where $\theta_1$, $\theta_2$, $\tilde\theta_2$, and $\theta_3$ are cutoffs, whose values are pinned down by the Lagrangian multipliers.

We then optimize over $P\in[\kappa,1]$. We assume that at the optimum, the sender can recommend a strictly positive measure of agents, otherwise the sender’s maximal payoff is zero. It can be shown that for a given $P$, if the solution takes the second form, it is strictly dominated by the solution when $P=\kappa$. Furthermore, it can be shown that $P=\kappa$ is also not optimal. Therefore, the optimal direct mechanism takes the first form with $P>\kappa$, which represents a mechanism that takes exactly the four-segment form presented in \Cref{thm:optimal-direct}. 

\section{Applications}
\label{sec:applications}

The model has a rich set of applications. This section discusses three examples: education and placement, industrial policy, and startup incubation. 

\subsection{Education and Placement}
A natural application is education. The sender is a school or program; the agents are students; and the receiver is the labor market, employers, or another institution that evaluates students after they leave the school. The treatment is a scarce educational resource, such as access to strong instructors, small honors sections, research positions, mentoring, internships, or placement support.

When these resources are allocated, the school does not yet know students' true ability. Admission files may not reveal which students will become strong candidates later, while students may have private information about their preparation, motivation, or fit. All students value scarce educational resources and successful placement.

After students take courses, complete projects, and interact with faculty, the school learns much more about their ability. It can then recommend students through grades, honors, or recommendation letters. Employers do not observe all of the school's internal information, so they decide whether to hire recommended students based on the expected quality of the recommended pool.

This maps directly into the model. The school wants to maximize the number of students accepted by employers. It has a limited treatment that improves student quality, but must allocate it before fully learning students' ability. Later, once ability is better observed, the school can recommend students to employers. The receiver's obedience constraint captures the credibility of these recommendations: if the school recommends too many weak students, employers discount the signal.

The model suggests that educational resources and placement support should be designed jointly. A school can implement this logic through honors programs or academic tracks. These tracks may differ both in access to scarce educational resources and in the probability of later recommendation. Students in a less resource-intensive track may receive fewer educational inputs, but may also face stronger placement support or a lower threshold for endorsement if they perform well.

Such tracks induce self-selection. Students choose based not only on current resources, but also on the future recommendation attached to each track. The school can therefore use track choice to identify a target group before it fully observes students' ability, implementing the same logic as the optimal joint mechanism.

\subsection{Industrial Policy and Firm Certification}
A second application is local industrial policy. The sender is a local government; the agents are firms; and the receiver is an upper-level government. The treatment is limited policy support, such as access to an industrial park, subsidized land, tax relief, talent-policy support, loan guarantees, infrastructure support, or procurement opportunities.

The local government wants to cultivate firms that can later receive recognition from an upper-level government as promising firms in strategically important sectors. Such recognition may serve as evidence of local policy success.\footnote{For example, China's ``specialized and innovative'' SME policies emphasize local support for firm development while also linking local implementation to upper-level recognition. A 2021 Ministry of Finance and Ministry of Industry and Information Technology notice asks local governments to improve support policies and public-service systems, with the goal of helping about 10,000 SMEs grow into national-level ``Little Giant'' firms. A 2024 notice further emphasizes central-local coordination, local cultivation services, provincial recommendation of candidate firms, and performance evaluation. See Ministry of Finance and Ministry of Industry and Information Technology, ``Notice on Supporting the High-Quality Development of Specialized and Innovative SMEs,'' 2021, \url{https://www.beijing.gov.cn/zhengce/zhengcefagui/qtwj/202204/t20220413_2675299.html}; and ``Notice on Further Supporting the High-Quality Development of Specialized and Innovative SMEs,'' 2024, \url{https://jjs.mof.gov.cn/zhengcefagui/202406/t20240618_3937446.htm}.}
 When local support is allocated, however, the government cannot perfectly distinguish high-potential firms from low-potential firms mainly seeking subsidies. Firms know more about their own productivity, technology, commitment, and growth prospects. All firms value policy support and recognition by the upper-level government.

Over time, the local government learns much more about firm quality. It observes production, employment, tax payments, project completion, orders, financing, patents, exports, and other performance measures. It can then recommend firms to the upper-level government. The upper-level government does not observe all local information, so it relies on the local government's recommendation. The recommendation must remain credible: if too many weak firms are recommended, the upper-level government will discount the signal.

This setting captures a central difficulty in industrial policy. Early policy support is valuable, but it may attract weak firms mainly interested in obtaining subsidies. The local government would like scarce support to reach firms that can later become credible candidates for upper-level recognition, but these firms are difficult to identify at the initial allocation stage.

The model highlights that industrial policy is not only about allocating subsidies. It is also about designing later certification. A local government may use future recommendation to induce firms to reveal information before scarce support is allocated. Some firms may accept less policy support in exchange for a better chance of later certification if they meet performance standards. This allows the government to save scarce resources while still identifying firms that can become valuable candidates for upper-level recognition.
\subsection{Incubators and Accelerators}

A third application is startup incubation. The sender is an incubator, accelerator, university entrepreneurship center, or local startup platform; the agents are startups; and the receiver is a group of later-stage investors, strategic partners, acquirers, or other outside evaluators. The treatment is a scarce developmental input, such as seed funding, office space, technical assistance, mentorship, legal support, cloud credits, network access, or investor introductions.

Early in the process, the incubator does not know which startups are truly high quality. Many startups are still at the idea or prototype stage, and the available information may consist mainly of founders' claims, pitch decks, and limited early traction. Founders may know more about their own technology, effort, or market fit than the incubator does. All startups value support from the incubator, and all startups value being recommended to investors or partners later.

After the incubation period, the incubator learns much more. It observes whether the team builds the product, whether customers respond, whether founders execute, and whether the business model improves. At that point, the incubator can recommend startups through demo days, curated introductions, endorsement letters, or access to a preferred investor network. Investors do not observe all of the incubator's private information, but they may trust the recommendation if the recommended pool is sufficiently strong.

This application captures the joint role of early support and later certification. If early support is allocated without using the later recommendation rule, some resources may go to startups that never become investor-ready. The incubator would instead like support to reach startups that are more likely to become credible candidates for later financing, but these startups are difficult to identify at the beginning.

The model suggests that incubators should design early support and later investor access jointly. Some startups may accept less early support in exchange for a better chance of later investor introduction if they reach certain milestones. This contract is attractive only to startups that expect to benefit from the later recommendation. The incubator can therefore use the promise of future certification to identify a target group and reserve more intensive resources for startups that are more valuable to the later recommendation pool.\\

\section{Conclusion}
\label{sec:conclusion}
This paper studies a joint mechanism and information design problem. A sender wants to maximize the mass of agents accepted by a receiver. The sender has two instruments: a scarce treatment that can improve agents' quality, and a later recommendation that affects the receiver's belief. The key friction is informational timing. Treatment must be allocated before the sender fully observes agents' types, while recommendation occurs later, after more information is revealed.

The main result is that treatment allocation and recommendation should be designed jointly. The optimal joint mechanism targets an intermediate group: the sender reduces treatment probability not for the lowest or highest types, but for agents in the middle. These agents receive a lower probability of treatment, but are compensated by a recommendation when treatment is realized. This non-monotone treatment allocation allows the sender to save treatment capacity and redirect it toward higher types, whose treated quality is more valuable for satisfying the receiver's acceptance constraint. The mechanism therefore uses recommendation in two ways. It communicates information to the receiver, and it also helps elicit information from agents before treatment is allocated.

The model applies to settings such as education, local industrial policy, and startup incubation, where an institution allocates scarce resources before fully learning agents' quality and later recommends selected agents to an outside evaluator. Across these applications, the central takeaway is the same: later recommendation should not be treated only as an information device, but also as an incentive device that improves the early allocation of scarce resources.

The model also speaks to empirical evaluations of programs. Lack of direct type observability is sometimes used as part of the argument for random or quasi-random program assignment. Our model suggests caution with this interpretation. A program designer who seeks to maximize program success may instead use a non-random treatment-allocation rule based on agents' messages, applications, or self-selection decisions. In this case, treatment assignment can be systematically related to agents' private information even though the designer does not observe types directly. Lack of direct type observability is therefore not sufficient for random assignment.

\clearpage
\begin{singlespace}
\bibliographystyle{aer}
\bibliography{our-cites.bib}

@article{KamenicaGentzkow2011,
  author  = {Kamenica, Emir and Gentzkow, Matthew},
  title   = {Bayesian Persuasion},
  journal = {American Economic Review},
  year    = {2011},
  volume  = {101},
  number  = {6},
  pages   = {2590--2615},
  doi     = {10.1257/aer.101.6.2590}
}

@article{RayoSegal2010,
  author  = {Rayo, Luis and Segal, Ilya},
  title   = {Optimal Information Disclosure},
  journal = {Journal of Political Economy},
  year    = {2010},
  volume  = {118},
  number  = {5},
  pages   = {949--987},
  doi     = {10.1086/657922}
}

@article{BergemannMorris2016,
  author  = {Bergemann, Dirk and Morris, Stephen},
  title   = {Information Design, {Bayesian} Persuasion, and {Bayes} Correlated Equilibrium},
  journal = {American Economic Review},
  year    = {2016},
  volume  = {106},
  number  = {5},
  pages   = {586--591},
  doi     = {10.1257/aer.p20161046}
}

@article{BergemannMorris2019,
  author  = {Bergemann, Dirk and Morris, Stephen},
  title   = {Information Design: A Unified Perspective},
  journal = {Journal of Economic Literature},
  year    = {2019},
  volume  = {57},
  number  = {1},
  pages   = {44--95},
  doi     = {10.1257/jel.20181489}
}

@article{Kolotilin2018,
  author  = {Kolotilin, Anton},
  title   = {Optimal Information Disclosure: A Linear Programming Approach},
  journal = {Theoretical Economics},
  year    = {2018},
  volume  = {13},
  number  = {2},
  pages   = {607--635},
  doi     = {10.3982/TE1805}
}

@article{Kamenica2019,
  author  = {Kamenica, Emir},
  title   = {{Bayesian} Persuasion and Information Design},
  journal = {Annual Review of Economics},
  year    = {2019},
  volume  = {11},
  pages   = {249--272},
  doi     = {10.1146/annurev-economics-080218-025739}
}

@article{MussaRosen1978,
  author  = {Mussa, Michael and Rosen, Sherwin},
  title   = {Monopoly and Product Quality},
  journal = {Journal of Economic Theory},
  year    = {1978},
  volume  = {18},
  number  = {2},
  pages   = {301--317},
  doi     = {10.1016/0022-0531(78)90085-6}
}

@article{Myerson1981,
  author  = {Myerson, Roger B.},
  title   = {Optimal Auction Design},
  journal = {Mathematics of Operations Research},
  year    = {1981},
  volume  = {6},
  number  = {1},
  pages   = {58--73},
  doi     = {10.1287/moor.6.1.58}
}

@article{BaronMyerson1982,
  author  = {Baron, David P. and Myerson, Roger B.},
  title   = {Regulating a Monopolist with Unknown Costs},
  journal = {Econometrica},
  year    = {1982},
  volume  = {50},
  number  = {4},
  pages   = {911--930},
  doi     = {10.2307/1912769}
}

@article{MaskinRiley1984,
  author  = {Maskin, Eric and Riley, John},
  title   = {Monopoly with Incomplete Information},
  journal = {The RAND Journal of Economics},
  year    = {1984},
  volume  = {15},
  number  = {2},
  pages   = {171--196},
  doi     = {10.2307/2555674}
}

@book{LaffontMartimort2002,
  author    = {Laffont, Jean-Jacques and Martimort, David},
  title     = {The Theory of Incentives: The Principal-Agent Model},
  publisher = {Princeton University Press},
  address   = {Princeton, NJ},
  year      = {2002}
}

@article{BergemannPesendorfer2007,
  author  = {Bergemann, Dirk and Pesendorfer, Martin},
  title   = {Information Structures in Optimal Auctions},
  journal = {Journal of Economic Theory},
  year    = {2007},
  volume  = {137},
  number  = {1},
  pages   = {580--609},
  doi     = {10.1016/j.jet.2007.02.001}
}

@article{BergemannHeumannMorris2026a,
  author  = {Bergemann, Dirk and Heumann, Tibor and Morris, Stephen},
  title   = {Screening with Persuasion},
  journal = {Journal of Political Economy},
  year    = {2026},
  doi     = {10.1086/738342},
  note    = {Forthcoming}
}

@article{Dworczak2020,
  author  = {Dworczak, Piotr},
  title   = {Mechanism Design with Aftermarkets: Cutoff Mechanisms},
  journal = {Econometrica},
  year    = {2020},
  volume  = {88},
  number  = {6},
  pages   = {2629--2661},
  doi     = {10.3982/ECTA15768}
}

@article{Lizzeri1999,
  author  = {Lizzeri, Alessandro},
  title   = {Information Revelation and Certification Intermediaries},
  journal = {The RAND Journal of Economics},
  year    = {1999},
  volume  = {30},
  number  = {2},
  pages   = {214--231},
  doi     = {10.2307/2556078}
}

@article{DranoveJin2010,
  author  = {Dranove, David and Jin, Ginger Zhe},
  title   = {Quality Disclosure and Certification: Theory and Practice},
  journal = {Journal of Economic Literature},
  year    = {2010},
  volume  = {48},
  number  = {4},
  pages   = {935--963},
  doi     = {10.1257/jel.48.4.935}
}

@article{Zapechelnyuk2020,
  author  = {Zapechelnyuk, Andriy},
  title   = {Optimal Quality Certification},
  journal = {American Economic Review: Insights},
  year    = {2020},
  volume  = {2},
  number  = {2},
  pages   = {161--176},
  doi     = {10.1257/aeri.20190387}
}

@misc{BoleslavskyKim2018,
  author = {Boleslavsky, Raphael and Kim, Kyungmin},
  title  = {{Bayesian} Persuasion and Moral Hazard},
  year   = {2018},
  note   = {SSRN working paper},
  doi    = {10.2139/ssrn.2913669}
}

@misc{SaeediShourideh2026,
  author = {Saeedi, Maryam and Shourideh, Ali},
  title  = {Optimal Rating Design under Moral Hazard},
  year   = {2026},
  note   = {Working paper, arXiv:2008.09529}
}

@misc{Xiao2024,
  author = {Xiao, Peiran},
  title  = {Incentivizing Agents through Ratings},
  year   = {2024},
  note   = {Working paper, arXiv:2407.10525}
}

@inproceedings{CamboniCarnehl2025,
  author    = {Camboni, Matteo and Carnehl, Christoph},
  title     = {Inputs or Outputs: What to Test and How to Test},
  booktitle = {Proceedings of the 26th ACM Conference on Economics and Computation},
  year      = {2025},
  pages     = {577},
  doi       = {10.1145/3736252.3742597}
}

@article{MartimortSemenov2006,
  author  = {Martimort, David and Semenov, Aggey},
  title   = {Continuity in Mechanism Design without Transfers},
  journal = {Economics Letters},
  year    = {2006},
  volume  = {93},
  number  = {2},
  pages   = {182--189},
  doi     = {10.1016/j.econlet.2006.04.014}
}

@incollection{SchummerVohra2007,
  author    = {Schummer, James and Vohra, Rakesh V.},
  title     = {Mechanism Design without Money},
  booktitle = {Algorithmic Game Theory},
  editor    = {Nisan, Noam and Roughgarden, Tim and Tardos, {\'E}va and Vazirani, Vijay V.},
  publisher = {Cambridge University Press},
  address   = {Cambridge},
  year      = {2007},
  chapter   = {10},
  pages     = {243--265}
}

@article{Miralles2012,
  author  = {Miralles, Antonio},
  title   = {Cardinal {Bayesian} Allocation Mechanisms without Transfers},
  journal = {Journal of Economic Theory},
  year    = {2012},
  volume  = {147},
  number  = {1},
  pages   = {179--206},
  doi     = {10.1016/j.jet.2011.11.002}
}

@article{kolotilin2017persuasion,
  title={Persuasion of a privately informed receiver},
  author={Kolotilin, Anton and Mylovanov, Tymofiy and Zapechelnyuk, Andriy and Li, Ming},
  journal={Econometrica},
  volume={85},
  number={6},
  pages={1949--1964},
  year={2017},
  publisher={Wiley Online Library}
}

@article{ben2014,
  title={Optimal allocation with costly verification},
  author={Ben-Porath, Elchanan and Dekel, Eddie and Lipman, Barton L},
  journal={American Economic Review},
  volume={104},
  number={12},
  pages={3779--3813},
  year={2014},
  publisher={American Economic Association 2014 Broadway, Suite 305, Nashville, TN 37203}
}

@article{zubrickas2015optimal,
  title={Optimal grading},
  author={Zubrickas, Robertas},
  journal={International Economic Review},
  volume={56},
  number={3},
  pages={751--776},
  year={2015},
  publisher={Wiley Online Library}
}

@article{chopra2025incentive,
  title={Incentive-Compatible Information Design},
  author={Chopra, Hershdeep and Ely, Jeffrey},
  year={2025}
}
\end{singlespace}


\clearpage

\begin{appendices}

\crefalias{section}{appendix}
\crefalias{subsection}{appendix}
\renewcommand{\thesubsection}{\Alph{section}.\arabic{subsection}} 
\renewcommand{\theequation}{\Alph{section}.\arabic{equation}}
\counterwithin{lemma}{section}
\renewcommand{\thelemma}{\Alph{section}\arabic{lemma}}
\counterwithin{proposition}{section}
\renewcommand{\theproposition}{\Alph{section}\arabic{proposition}}
\counterwithin{figure}{section}
\counterwithin{table}{section}

\section{Proof of \topref{Theorem }{thm:optimal-direct}}
\label{app:prf-direct}

In this section, we provide a characterization for the solution to problem \eqref{eq:sender-primal}, which is restated below.
\begin{equation}
\begin{aligned}
    \max_{p,q}\quad
    &
    \int_{\Theta}
    \left[
        p(\theta)q(\theta,\theta,1)
        +
        (1-p(\theta))q(\theta,\theta,0)
    \right]\D F(\theta), \\
    \text{s.t.}\quad
    & \int_{\Theta} p(\theta)\,\D F(\theta)\leq \kappa,
    && \text{(Capacity)} \\
    & \mu_\pi\geq \tau,
    && \text{(Obedience)} \\
    & U_\pi(\theta,\theta)\geq U_\pi(\theta,\hat{\theta}),
    \quad \forall \theta,\hat{\theta}\in\Theta,
    && \text{(IC)}
\end{aligned}
\tag{\ref*{eq:sender-primal}}
\end{equation}
.



\begin{lemma}
    \label{lem:punish}
    There exists an optimal mechanism $(p,q)$ that satisfies $q(\hat\theta,\theta,d)=0$ for all $\hat\theta\neq \theta$ and $d\in\{0,1\}$.
\end{lemma}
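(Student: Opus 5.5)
Take an arbitrary optimal mechanism $(p,q)$, or more generally any feasible one, and build a modified mechanism $(p,\tilde q)$. Keep the allocation rule $p$ unchanged and set
\[
\tilde q(\theta,\theta,d)=q(\theta,\theta,d),\qquad \tilde q(\hat\theta,\theta,d)=0\quad\text{for all }\hat\theta\neq\theta,\ d\in\{0,1\}.
\]
It then suffices to show that $(p,\tilde q)$ is feasible and attains the same objective value. If so, whenever an optimal mechanism exists, this modified version of it is also optimal and has the required form.

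\textbf{Objective, capacity and obedience.} Under truth-telling, three expressions depend on $q$ only through the diagonal values $q(\theta,\theta,d)$:
\begin{itemize}
\item[(i)] the sender's objective in \eqref{eq:sender-primal};
\item[(ii)] the capacity constraint, which involves only $p$;
\item[(iii)] the posterior mean $\mu_\pi$, whose numerator and denominator are both integrals of $p(\theta)$ and $q(\theta,\theta,d)$.
\end{itemize}
Since $\tilde q$ agrees with $q$ on the diagonal, the objective value, the capacity constraint and the obedience constraint $\mu_\pi\ge\tau$ are all unchanged. In particular, the denominator stays positive whenever it was positive before.

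\textbf{Incentive compatibility.} For every $\theta$, the truthful payoff $U(\theta,\theta)$ is identical under $q$ and $\tilde q$. For any misreport $\hat\theta\neq\theta$, the deviation payoff weakly falls:
\[
\tilde U(\theta,\hat\theta)=p(\hat\theta)u\;\le\; p(\hat\theta)\bigl[u+q(\hat\theta,\theta,1)\bigr]+\bigl(1-p(\hat\theta)\bigr)q(\hat\theta,\theta,0)=U(\theta,\hat\theta).
\]
The inequality holds because recommendation probabilities are nonnegative. Combining these,
\[
\tilde U(\theta,\theta)=U(\theta,\theta)\ge U(\theta,\hat\theta)\ge\tilde U(\theta,\hat\theta),
\]
so (IC) is preserved.

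\textbf{Remaining subtlety.} The argument is essentially immediate. The only delicate point is what "optimal" means here: the lemma asserts existence of an optimal mechanism with this property, not existence of an optimum as such. I would state the result as a transformation: any optimal (indeed any feasible) mechanism maps to one with off-diagonal recommendations set to zero and the same value. Existence of an optimum itself is handled later in the analysis of the reduced problem. Measurability of $\tilde q$ is also unaffected, since we only alter $q$ off the diagonal.
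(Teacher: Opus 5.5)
Your proposal is correct and follows essentially the paper's argument: off-diagonal recommendation probabilities enter only the deviation payoff in (IC), which is increasing in them, so setting them to zero preserves feasibility and leaves the objective unchanged. Your explicit check that the objective, capacity, and obedience depend only on diagonal values, and your note that the lemma presupposes an optimum exists, are more careful versions of the same step.
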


\begin{proof}
    In problem \eqref{eq:sender-primal}, when $\hat\theta\neq \theta$, $q(\hat\theta,\theta,d)$ shows up only in the IC constraint, which expands to 
    \begin{equation*}
    p(\theta)
    [
        u+q(\theta,\theta,1)
    ]
    +
    (1-p(\theta))
    q(\theta,\theta,0)
    \geq 
    p(\hat{\theta})
    [
        u+q(\hat{\theta},\theta,1)
    ]
    +
    (1-p(\hat{\theta}))
    q(\hat{\theta},\theta,0).
    \end{equation*}
    Clearly, the right-hand side is increasing in $q(\htt,\theta,0)$ and $q(\htt,\theta,1)$. Therefore, it is optimal to set them such that the IC constraint is as slack as possible, i.e., $q(\htt,\theta,0)=q(\htt,\theta,1)=0$ whenever $\htt\neq \theta$. This is essentially assigning the harshest punishment available whenever a lie is detected.
\end{proof}

\Cref{lem:punish} implies that we can restrict our attention to the mechanisms that impose the harshest punishment to agents that misreport.

We define the ```ex-ante recommendation rule'' $Q:[0,1]\to[0,1]$ as
\begin{equation*}
    Q(\theta):=p(\theta)q(\theta,\theta,1)+[1-p(\theta)]q(\theta,\theta,0).
\end{equation*}
``Ex-ante'' here means before assigning the treatment.

With \Cref{lem:punish}, problem \eqref{eq:sender-primal} can be rewritten using the definition of $Q$ as
\begin{equation}
\label{eq:sender-primal-w-Q}
\begin{aligned}
    \max_{p,q}\quad
    &\int_{\Theta}
        Q(\theta)\D F(\theta),
     \\
    \text{s.t.}\quad
    & \int_{\Theta} p(\theta)\,\D F(\theta)\leq \kappa,
    && \text{(Capacity)} \\
    & \int_\Theta [(\theta-\tau) Q(\theta)+v p(\theta)q(\theta,\theta,1)]\D F(\theta)\geq 0,
    && \text{(Obedience)} \\
    & Q(\theta)+p(\theta) u\geq p(\hat\theta) u,
    \quad \forall \theta,\hat{\theta}\in\Theta.
    && \text{(IC)}
\end{aligned}
\end{equation}
The obedience constraint is expanded and simplified according to the definition of $\mu_\pi$. The IC constraint can be further rewritten as
\begin{align*}
    Q(\theta)+p(\theta)u\geq \sup_{\hat\theta\in\Theta} p(\hat\theta)u,\quad\forall \theta\in\Theta.
\end{align*}

\begin{lemma}
\label{lem:A-is-minpq}
    The optimal mechanism $(p,q)$ satisfies
    \begin{align*}
        q(\theta,\theta,1)&=\min\left\{1,\frac{Q(\theta)}{p(\theta)}\right\}, \\
        q(\theta,\theta,0)&=\max\left\{\frac{Q(\theta)-p(\theta)}{1-p(\theta)},0\right\}.
    \end{align*}
    When $p(\theta)\in\{0,1\}$, the value of $q(\theta,\theta,1-p(\theta))$ is irrelevant.
\end{lemma}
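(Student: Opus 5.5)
Fix an optimal mechanism $(p,q)$ that, by \Cref{lem:punish}, already satisfies $q(\hat\theta,\theta,d)=0$ off the truthful report. Keep $p$ and the ex-ante recommendation rule $Q$ unchanged and only redistribute each type's recommendation probability between the treated and untreated states. Under problem \eqref{eq:sender-primal-w-Q}, the objective $\int Q\,\D F$, the capacity constraint and the IC constraint depend on $(p,q)$ only through $p$ and $Q$. The only term that depends on how $Q(\theta)$ is split is $v\,p(\theta)q(\theta,\theta,1)$ in the obedience constraint, and it enters with a positive coefficient since $v>0$. Hence, for each $\theta$, the sender wants to choose $(q_1,q_0)=(q(\theta,\theta,1),q(\theta,\theta,0))\in[0,1]^2$ to maximize $p(\theta)q_1$ subject to the accounting identity $p(\theta)q_1+(1-p(\theta))q_0=Q(\theta)$.

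Solving this pointwise linear program is the main step. For $p(\theta)\in(0,1)$, the feasible set is a segment in the unit square. Along it, $p(\theta)q_1$ increases as $q_0$ decreases, so the optimum pushes $q_1$ up until either $q_1=1$ or $q_0=0$.
\begin{itemize}
\item If $Q(\theta)\le p(\theta)$, set $q_0=0$ and $q_1=Q(\theta)/p(\theta)\le 1$.
\item Otherwise set $q_1=1$ and $q_0=(Q(\theta)-p(\theta))/(1-p(\theta))\in[0,1]$.
\end{itemize}
These two cases are exactly the stated formulas $q_1=\min\{1,Q/p\}$ and $q_0=\max\{(Q-p)/(1-p),0\}$. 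The resulting treated-and-recommended mass is $p(\theta)q_1=\min\{p(\theta),Q(\theta)\}$, which is the expression that appears in the reduced problem of the proof sketch. When $p(\theta)\in\{0,1\}$, the state $d=1-p(\theta)$ occurs with probability zero, so its recommendation probability affects neither $Q$ nor obedience and is irrelevant. The formula for the realized state is forced: $q_1=Q$ if $p=1$, and $q_0=Q$ if $p=0$, both consistent with the min/max expressions.

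Next, I would argue that the replacement preserves optimality. Replace $q(\theta,\theta,\cdot)$ by the pointwise maximizer for every $\theta$. Measurability follows because the formulas are measurable functions of $p$ and $Q$. The objective, capacity and IC are unchanged, and the obedience integrand weakly increases pointwise, so the new mechanism is feasible and optimal.

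The one subtlety, and the step I expect to require care, is the word ``the'' in the statement: the redistribution only shows that \emph{some} optimal mechanism has this form. To get the claim for every optimum, I would add a strict-improvement argument. Suppose an optimum deviates from these formulas on a positive-measure set. Then obedience becomes strictly slack after redistribution. In the nondegenerate interior case, that slack lets the sender raise $Q$ slightly on a set where $Q<1$, for instance by raising $q$ for top types, which also keeps IC intact since IC only needs $Q$ large. This strictly increases the objective, a contradiction. Otherwise I would state the lemma as holding almost everywhere, without loss.
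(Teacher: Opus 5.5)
Your proposal is correct and follows the paper's route. Holding $p$ and $Q$ fixed, the split of recommendation probability between the treated and untreated states matters only through $v\,p(\theta)q(\theta,\theta,1)$ in obedience, so $q(\theta,\theta,1)$ is pushed to its maximal feasible value $\min\{1,Q(\theta)/p(\theta)\}$. Your extra strict-improvement step works: in the nondegenerate interior case $Q=1$ almost everywhere violates obedience, so a strictly slack constraint leaves room to raise $Q$. It upgrades the paper's ``without loss'' argument to a statement about every optimum almost everywhere, and it mirrors the argument the paper itself uses later to show that obedience binds.
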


\begin{proof}
    Given fixed $p(\theta)$ and $Q(\theta)$, the obedience constraint becomes slacker when $q(\theta,\theta,1)$ increases. Therefore, it is optimal to set $q(\theta,\theta,1)$ to be its maximal feasible value. The only two constraints on the upper bound of $q(\theta,\theta,1)$ are $q(\theta,\theta,1)\leq 1$, and
    \begin{equation*}
        p(\theta)q(\theta,\theta,1)+[1-p(\theta)]q(\theta,\theta,0)=Q(\theta)\leq 1\implies q(\theta,\theta,1)\leq \frac{Q(\theta)}{p(\theta)},
    \end{equation*}
    because $p(\theta)$, $q(\theta,\theta,0)$ and $1-p(\theta)$ are non-negative. Hence, the optimal $q(\theta,\theta,1)$ is $\min\{1,\frac{Q(\theta)}{p(\theta)}\}$, and the corresponding optimal $q(\theta,\theta,0)$ takes the value such that $p(\theta)q(\theta,\theta,1)+[1-p(\theta)]q(\theta,\theta,0)=Q(\theta)$.
\end{proof}

\Cref{lem:A-is-minpq} implies that it is sufficient for the sender to design $p$ and ex-ante recommendation rule $Q$ as the mechanism. The optimal treatment-specific recommendation rule $q(\theta,\theta,0)$ and $q(\theta,\theta,1)$ can be recovered from the optimal $p$ and $Q$. This further simplifies the sender's problem into 
\begin{equation}
\label{eq:sender-prob}
\begin{aligned}
    \max_{p,Q}\quad
    &\int_{\Theta}
        Q(\theta)\D F(\theta),
     \\
    \text{s.t.}\quad
    & \int_{\Theta} p(\theta)\,\D F(\theta)\leq \kappa,
    && \text{(Capacity)} \\
    & \int_\Theta [(\theta-\tau) Q(\theta)+v \min\{p(\theta),Q(\theta)\}]\D F(\theta)\geq 0,
    && \text{(Obedience)} \\
    & Q(\theta)+p(\theta) u\geq \sup_{\htt\in\Theta}p(\hat\theta) u,
    \quad \forall \theta\in\Theta.
    && \text{(IC)}
\end{aligned}
\end{equation}
We hereby refer to $(p,Q)$ also as a mechanism.

\begin{lemma}
\label{lem:cap-bind}
    There exists an optimal solution to problem \eqref{eq:sender-prob} such that the capacity constraint binds.
\end{lemma}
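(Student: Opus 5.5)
The plan is to start from an arbitrary optimal solution $(p,Q)$ of \eqref{eq:sender-prob} with slack capacity, $\int_\Theta p\,\D F<\kappa$, and modify it into another optimal solution whose capacity constraint binds. The natural modification raises treatment probabilities without lowering $Q$. This leaves the objective $\int Q\,\D F$ unchanged. It can only help obedience, because the term $v\min\{p(\theta),Q(\theta)\}$ is weakly increasing in $p(\theta)$. The only constraint that can be hurt is (IC), whose right-hand side is $u\sup_{\htt}p(\htt)$.

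The modification I would use is uniform: set $\bar P:=\sup_\theta p(\theta)$ and, for $\lambda\in[0,1]$,
\begin{equation*}
p_\lambda(\theta):=p(\theta)+\lambda\bigl(\bar P-p(\theta)\bigr).
\end{equation*}
This keeps $\sup p_\lambda=\bar P$, so the (IC) right-hand side $\bar P u$ is unchanged. Each left-hand side $Q(\theta)+p_\lambda(\theta)u$ weakly increases, so (IC) still holds. The capacity used, $\int p_\lambda\,\D F$, is continuous and increasing in $\lambda$, and equals $\bar P$ at $\lambda=1$.

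\textbf{Case $\bar P\ge\kappa$.} By the intermediate value theorem there is some $\lambda\in[0,1]$ with $\int p_\lambda\,\D F=\kappa$. The mechanism $(p_\lambda,Q)$ is then feasible, has the same objective, and the capacity constraint binds.

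\textbf{Case $\bar P<\kappa$.} Here I would first use $p_1\equiv\bar P$ and then raise the constant to $p\equiv\kappa$, which is feasible since $\kappa<1$. Raising the constant from $\bar P$ to $\kappa$ increases the (IC) right-hand side from $\bar P u$ to $\kappa u$. The left-hand side $Q(\theta)+\kappa u$ rises by exactly the same amount, $(\kappa-\bar P)u$, so (IC) is preserved. Obedience weakly improves, and the objective is unchanged. The capacity constraint now holds with equality.

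The main obstacle is the interaction between raising $p$ and the sup in (IC). A naive increase of $p$ at a single type could raise $\sup p$ and break IC for every other type. The two-step construction, first filling up to the current supremum and then shifting all types uniformly, is designed to avoid this. One subtlety to check is that with a supremum that is not attained, $p_\lambda\le\bar P$ still holds pointwise, so nothing breaks.
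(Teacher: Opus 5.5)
Your argument is correct and rests on the same idea as the paper. You hold $Q$ fixed, so the objective is unchanged and obedience can only improve because $\min\{p,Q\}$ is nondecreasing in $p$, and you raise $p$ by a convex combination that preserves (IC). The only difference is the mixing target. The paper sets $\tilde p=p+\frac{\kappa-k}{1-k}(1-p)$ with $k=\int_\Theta p\,\D F$, i.e., it mixes toward $p\equiv 1$. This raises $\sup p$, but it shrinks every gap $\sup p-p(\theta)$ by the factor $1-\frac{\kappa-k}{1-k}$, so (IC) survives in a single step without your case split on $\bar P\ge\kappa$ versus $\bar P<\kappa$.
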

\begin{proof}
    Suppose $(p,Q)$ is an optimal solution to \eqref{eq:sender-prob} such that
    \begin{equation*}
        k:=\int_\Theta p(\theta)\D F(\theta) < \kappa.
    \end{equation*}
    We define 
    \begin{equation*}
        \tilde p(\theta):=p(\theta)+\frac{\kappa-k}{1-k}(1-p(\theta)).
    \end{equation*}
    It is easy to verify that $\tilde p:[0,1]\to [0,1]$, 
    \begin{equation*}
        \int_\Theta \tilde p(\theta)\D F(\theta) = \kappa,
    \end{equation*}
    and obedience and IC constraints still hold for $(\tilde p,Q)$. Therefore, $(\tilde p,Q)$ is an optimal solution to \eqref{eq:sender-prob} such that the capacity constraint binds.
\end{proof}

\Cref{lem:cap-bind} implies that $\sup_{\htt\in\Theta}p(\htt)\geq \kappa$ at the optimum. Define $P=\sup_{\htt\in\Theta}p(\htt)$. By \Cref{lem:cap-bind}, we restrict our attention to the case where $P\in[\kappa,1]$. To solve \eqref{eq:sender-prob}, we follow a two-step procedure to linearize the constraints. We first characterize the optimal $(p,Q)$ for a given $P$ by solving
\begin{equation}
\label{eq:fix-P-max-over-p-Q}
\begin{aligned}
    \max_{p,Q}\quad
    &\int_{\Theta}
        Q(\theta)\D F(\theta),
     \\
    \text{s.t.}\quad
    & \int_{\Theta} p(\theta)\,\D F(\theta)\leq \kappa,
    && \text{(Capacity)} \\
    & \int_\Theta [(\theta-\tau) Q(\theta)+v \min\{p(\theta),Q(\theta)\}]\D F(\theta)\geq 0,
    && \text{(Obedience)} \\
    & Q(\theta)+p(\theta) u\geq Pu,
    \quad \forall \theta\in\Theta,
    && \text{(IC$_1$)}\\
    & p(\theta)\leq P,
    \quad \forall \theta\in\Theta,
    && \text{(IC$_2$)}
\end{aligned}
\end{equation}

We then substitute the resulting solution $(p_P,Q_P)$ into the objective and choose $P$ to solve
\begin{equation*}
\label{eq:max-over-P}
    \max_{P\in[\kappa,1]}\ \int_{\Theta}
        Q_P(\theta)\D F(\theta),\quad\st\ (p_P,Q_P)\text{ solves \eqref{eq:fix-P-max-over-p-Q}}.
\end{equation*}

To solve problem \eqref{eq:fix-P-max-over-p-Q}, we introduce Lagrangian multipliers for the capacity and obedience constraints:
\begin{equation}
    \label{eq:lagrangian}
\begin{aligned}
    \max_{p,Q}\ &\int_\Theta \bigl[Q(\theta)-\lambda p(\theta)+\eta((\theta-\tau) Q(\theta) + v \min\{p(\theta),Q(\theta)\})\bigr]\D F(\theta)+\lambda\kappa,\\
    \st&\text{(IC$_1$) } Q(\theta)+p(\theta)u\geq Pu,\quad\forall \theta\in\Theta,\\
    &\text{(IC$_2$) } p(\theta)\leq P,\quad\forall \theta\in\Theta.\\
\end{aligned}
\end{equation}
    
\begin{lemma}
    \label{lem:lagrangian}
    For a fixed $P\in[\kappa,1]$, if the constraint set of \eqref{eq:fix-P-max-over-p-Q} is non-empty, $(p^*,Q^*)$ is the solution to problem \eqref{eq:fix-P-max-over-p-Q} if and only if there exists $\lambda,\eta\geq0$ satisfying the KKT conditions such that $(p^*,Q^*)$ is also a solution to problem \eqref{eq:lagrangian}.
\end{lemma}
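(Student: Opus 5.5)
The plan is to treat \eqref{eq:fix-P-max-over-p-Q} as a convex program in the function space of pairs $(p,Q)$ and apply Lagrangian duality with respect to the two integral constraints only, keeping the pointwise constraints (IC$_1$), (IC$_2$) and $0\le p,Q\le 1$ in the primal feasible set $\mathcal C_P$. The first step is to verify convexity. The objective $\int Q\,\D F$ is linear. The capacity constraint is linear. The obedience integrand $(\theta-\tau)Q+v\min\{p,Q\}$ is concave in $(p,Q)$, since the minimum of two linear functions is concave. Hence the obedience constraint defines a convex set, and $\mathcal C_P$ is convex because it is cut out pointwise by linear inequalities. The problem is therefore of the form: maximize a linear functional over a convex set subject to finitely many (two) concave inequality constraints. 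Such a problem admits the standard finite-dimensional-constraint Lagrangian theorem (e.g. Luenberger, Ch.~8, Thm.~1), which requires no topology on the decision space beyond convexity.

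The second step is Slater's condition. I will argue that whenever the feasible set is nonempty, one can either find a point at which both integral constraints hold strictly, or handle the degenerate case directly.
\begin{itemize}
\item Capacity slack is easy when $P>\kappa$ or when one lowers $p$ on low types. Keeping IC$_1$ then requires raising $Q$ there, and care is needed.
\item A cleaner route is the image-space argument. Consider the convex set $\mathcal A=\{(a_0,a_1,a_2)\}$ of values dominated by (objective, $\kappa-\int p$, obedience integral) over $\mathcal C_P$. Separate the point (optimal value $+\epsilon$, $0$, $0$) from $\mathcal A$ by a hyperplane, which yields multipliers $(\lambda_0,\lambda,\eta)\ge0$. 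Then show $\lambda_0>0$.
\end{itemize}
This last point is exactly where a Slater-type point is needed. If $\lambda_0=0$ were forced, the multipliers $(\lambda,\eta)$ would certify that no point of $\mathcal C_P$ satisfies both constraints strictly. That would mean the feasible set lies on the boundary, e.g. obedience can only hold with $Q\equiv0$ a.e. In that case $Q^*=0$ is trivially optimal and the KKT conditions can be satisfied by a suitable choice such as $\eta$ large and $\lambda$ chosen so that $p^*$ maximizes the Lagrangian. I expect this degenerate boundary case to be the main obstacle, and I would handle it separately by noting that $Q\equiv 0$, $p$ uniform at level satisfying IC$_1$, is the only feasible configuration.

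The third step is the two directions of the equivalence.
\begin{itemize}
\item \emph{Necessity.} With $\lambda_0>0$, normalize $\lambda_0=1$. The separation inequality then says that $(p^*,Q^*)$ maximizes \eqref{eq:lagrangian} over $\mathcal C_P$, and that complementary slackness holds: $\lambda(\kappa-\int p^*)=0$ and $\eta\cdot(\text{obedience integral})=0$.
\item \emph{Sufficiency.} This is the routine weak-duality argument. If $(p^*,Q^*)$ maximizes the Lagrangian with multipliers satisfying primal feasibility and complementary slackness, then for any feasible $(p,Q)$ we have
\[
\int Q\,\D F\le L(p,Q)\le L(p^*,Q^*)=\int Q^*\,\D F.
\]
Here the first inequality uses $\lambda,\eta\ge0$ together with feasibility of $(p,Q)$, and the final equality uses complementary slackness.
\end{itemize}
Measurability is preserved throughout since all constructions are pointwise.
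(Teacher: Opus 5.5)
Your sufficiency direction (weak duality) is correct, and so is necessity once you have a separating hyperplane with $\lambda_0>0$. The gap is at the step you flag: you never establish a constraint qualification, and your treatment of the degenerate case is wrong on both counts.

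\emph{First}, failure of strict feasibility does not force $Q\equiv 0$. For any $P>\kappa$, every feasible point has $\int p\,\D F\le\kappa<P$. So $p<P$ on a set of positive measure, and (IC$_1$) forces $Q>0$ there. Yet consider the largest $P$ for which the constraint set is nonempty; it lies in $(\kappa,1)$ when, e.g., $\tau$ is slightly below $1+v$. At that $P$, every feasible point has obedience integral exactly zero. Otherwise, raising $P$ by a small $\delta$ and replacing $Q$ by $\min\{1,Q+u\delta\}$ would remain feasible. So no Slater point exists, even though $Q\not\equiv 0$.

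\emph{Second}, even when $(\kappa,0)$ is the only feasible configuration, ``$\eta$ large'' need not work. Take $\tau=1+v$, which is allowed in the nondegenerate case, and $P=\kappa$. The obedience integrand is at most $(\theta-1)Q$, so the constraint set is exactly $(p,Q)=(\kappa,0)$ a.e. Now fix any $\lambda,\eta\ge 0$ and replace $Q=0$ by $Q=\epsilon\le\kappa$ on $(1-\delta,1]$ with $\eta\delta<1$. This keeps (IC$_1$) and (IC$_2$) and changes the Lagrangian in \eqref{eq:lagrangian} by
\[
\epsilon\int_{1-\delta}^{1}\bigl[1+\eta(\theta-1)\bigr]\D F(\theta)>0 .
\]
Hence $(\kappa,0)$ never solves \eqref{eq:lagrangian}. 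Relaxing obedience by $b$ raises the value at order $\sqrt b$, so no finite multipliers exist. The ``only if'' direction is therefore false here and cannot be proved as stated.

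The paper handles this step differently. It replaces $\min\{p,Q\}$ by an auxiliary $m(\theta)$ so that both integral constraints become affine. It then asserts that for affine constraints mere feasibility suffices for Slater's condition. That is a finite-dimensional (polyhedral) fact. Here a continuum of pointwise constraints stays in the base set, so the image of $(p,Q,m)\mapsto\bigl(\int p\,\D F,\ \text{obedience integral}\bigr)$ is a non-polyhedral convex subset of $\mathbb R^2$. The example above shows the paper's linearization does not close the degenerate case either.

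What your image-space argument does prove cleanly is necessity under an added hypothesis: some $(p,Q)\in\mathcal C_P$ has $\int p\,\D F<\kappa$ and a strictly positive obedience integral. You should state the lemma with that hypothesis, or restrict it to such $P$, rather than try to cover the degenerate configurations by hand.
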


\begin{proof}
    In problem \eqref{eq:fix-P-max-over-p-Q}, the objective function is linear in $(p,Q)$. All constraints except the obedience constraint are affine in $(p,Q)$. It is easy to verify that $\min\{p(\theta),Q(\theta)\}$ is a concave function in $(p(\theta),Q(\theta))$, and therefore the left-hand side of the obedience constraint is concave in $(p,Q)$. Hence, the constraint set is a convex set, and therefore problem \eqref{eq:fix-P-max-over-p-Q} is a convex optimization problem.

    We can linearize the constraint set by replacing $\min\{p(\theta),Q(\theta)\}$ with $m(\theta)$ and introducing constraint $p(\theta),Q(\theta)\geq m(\theta)\geq 0$. The new constraint set is affine in $(p,Q,m)$. For an affine constraint set, the Slater's condition only requires that a feasible point exists. As long as the original constraint set is non-empty, $(p,Q,\min\{p,Q\})$ is feasible in the linearized constraint set. Therefore, the Slater's condition is satisfied, which finishes the proof.
\end{proof}

\Cref{lem:lagrangian} implies that every positive-payoff solution relevant for problem \eqref{eq:fix-P-max-over-p-Q} is supported by problem \eqref{eq:lagrangian}. Hence, to characterize the optimal direct mechanism, it is sufficient to characterize the solutions to \eqref{eq:lagrangian} and then find a specific combination of multipliers such that its corresponding solution to \eqref{eq:lagrangian} also satisfies the capacity and obedience constraints. 

Notice that the objective function of \eqref{eq:fix-P-max-over-p-Q} is increasing in $Q$, and that the IC constraint becomes slacker when $Q$ increases. Since the obedience constraint is continuous in $Q$, as long as it is slack, the objective can be improved by increasing $Q$ slightly without violating the constraints. Therefore, at optimum, the obedience constraint binds, and \Cref{lem:lagrangian} allows the
supporting multipliers to be chosen with $\eta>0$.

Given $\lambda$ and $\eta$, problem \eqref{eq:lagrangian} is a pointwise maximization problem with linear constraints. Let $(p_{\lambda,\eta},Q_{\lambda,\eta})$ be the solution to problem \eqref{eq:lagrangian} given the multipliers $\lambda$ and $\eta$. We further define
\begin{align*}
    \Lambda=\frac\lambda\eta,\ \Eta=\utt-\frac 1\eta.
\end{align*}

\begin{proposition}
    \label{prop:beautiful-graph}\
    Given $\lambda\geq0$ and $\eta>0$, the solution to problem \eqref{eq:lagrangian} is given by
    \begin{equation}
    \label{eq:beautiful-graph-sol}
        (p_{\lambda,\eta}(\theta),Q_{\lambda,\eta}(\theta))=\switch{
         &(P,1), &&\Eta\leq \theta ,\ \Lambda\leq v,\\
         &(0,1), &&\Eta\leq \theta ,\ \Lambda\geq v,\\
         &(P,P), &&\Eta\geq \theta ,\ \Lambda+\Eta\leq \theta +v,\\
         &(P,0),&&\Eta\geq \theta ,\ \Eta-\frac \Lambda u\geq \theta +v,\\
         &(0,uP), &&\Eta\geq \theta ,\ \Eta-\frac\Lambda u\leq \theta -\frac vu,\\
         &\left(\frac{u}{1+u}P,\frac{u}{1+u}P\right), &&\text{otherwise}.
        }
    \end{equation}
\end{proposition}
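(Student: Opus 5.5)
Dividing the integrand of \eqref{eq:lagrangian} by $\eta>0$ and dropping the constant $\lambda\kappa$ shows that \eqref{eq:lagrangian} is equivalent, pointwise in $\theta$, to maximizing
\begin{equation*}
    \phi_\theta(p,Q):=(\theta-\Eta)\,Q-\Lambda p+v\min\{p,Q\}
\end{equation*}
over the set $D=\{(p,Q): 0\le p\le P,\ 0\le Q\le 1,\ Q+pu\ge Pu\}$. This works because $Q/\eta+(\theta-\utt)Q=(\theta-\Eta)Q$. The plan is to solve this two-dimensional problem for each fixed $\theta$ and read off the six cases.

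\textbf{Reduce to vertices.} The function $\phi_\theta$ is piecewise linear, with a single kink along the diagonal $Q=p$. The set $D$ is a polygon; since $P\le 1$, the line $Q=u(P-p)$ stays below $Q=1$. Splitting $D$ by the diagonal gives two convex polygons, and $\phi_\theta$ is linear on each, so a maximum is attained at a vertex of one of them. These vertices are
\begin{equation*}
    (0,1),\quad (P,1),\quad (P,P),\quad (P,0),\quad (0,uP),\quad (c,c) \text{ with } c=\tfrac{u}{1+u}P,
\end{equation*}
where the last point is the intersection of the IC$_1$ boundary with the diagonal. Their values are, respectively,
\begin{equation*}
    \theta-\Eta,\quad \theta-\Eta+(v-\Lambda)P,\quad P(\theta-\Eta+v-\Lambda),\quad -\Lambda P,\quad (\theta-\Eta)uP,\quad c(\theta-\Eta+v-\Lambda).
\end{equation*}

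\textbf{Case $\theta\ge\Eta$.} For fixed $p$, $\phi_\theta$ is nondecreasing in $Q$, with slope $\theta-\Eta+v$ below the diagonal and slope $\theta-\Eta\ge0$ above it. Hence $Q=1$ is optimal, and $\phi_\theta(p,1)=\theta-\Eta+(v-\Lambda)p$. The optimal $p$ is therefore $P$ if $\Lambda\le v$ and $0$ if $\Lambda\ge v$. This gives the first two lines of \eqref{eq:beautiful-graph-sol}.

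\textbf{Case $\theta<\Eta$.} Write $a=\theta-\Eta<0$ and $b=a+v-\Lambda$. The points $(0,1)$ and $(P,1)$ are dominated by lowering $Q$ toward the diagonal, because $\phi_\theta$ has slope $a<0$ above it. This leaves four candidates, with values $Pb$, $-\Lambda P$, $auP$ and $cb$.
\begin{itemize}
    \item If $b\ge0$, i.e.\ $\Lambda+\Eta\le\theta+v$, then $Pb\ge cb$ and $Pb\ge0\ge-\Lambda P$. Also $Pb\ge auP$ since $a<0$. So $(P,P)$ is optimal.
    \item If $b<0$, the point $(P,P)$ is dominated by $(c,c)$, so the remaining comparisons are pairwise.
    \item $-\Lambda P\ge cb$ reduces to $-\Lambda\ge u(a+v)$, i.e.\ $\Eta-\Lambda/u\ge\theta+v$.
    \item $auP\ge cb$ reduces to $ua\ge v-\Lambda$, i.e.\ $\Eta-\Lambda/u\le\theta-v/u$.
    \item In the remaining region $(c,c)$ is optimal.
\end{itemize}
These are exactly the last four lines of \eqref{eq:beautiful-graph-sol}.

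\textbf{Main obstacle.} The main remaining work is checking that the case conditions are mutually consistent. Concretely, when the $(P,0)$ and $(0,uP)$ conditions both hold, a direct comparison of $-\Lambda P$ with $auP$ must agree with the stated selection, up to ties. I would verify this by noting that both conditions together force $v\le -v/u$, which is impossible. So the two regions are disjoint except on measure-zero boundaries, where ties can be broken as displayed. Since only $F$-a.e.\ values matter, boundary ties do not affect optimality.
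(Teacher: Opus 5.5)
Your proof is correct and takes essentially the paper's approach: reduce to the pointwise piecewise-linear program, enumerate the same six vertices with the same values, and compare them case by case. You organize the comparisons more economically than the paper's six exhaustive cases, using monotonicity in $Q$ when $\theta\geq\Eta$, pruning dominated candidates, and comparing the rest pairwise against $(c,c)$. Your incompatibility argument actually shows the $(P,0)$ and $(0,uP)$ regions never overlap, not merely that they overlap only on boundaries.
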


\begin{proof}
Problem \eqref{eq:lagrangian} is a pointwise optimization problem. For a given $\theta$, we introduce $m=\min\{p(\theta),Q(\theta)\}$ to linearize the problem. Equivalently, we can rewrite the problem into the following linear programming problem with $\Lambda$ and $\Eta$:
\begin{align*}
    \max_{p(\theta),Q(\theta),m}\ &-\Lambda\cdot p(\theta)+(\theta -\Eta)\cdot Q(\theta)+v\cdot m\\
    \st\ &Q(\theta)+p(\theta)u\geq Pu,\\
    & p(\theta)\leq P,\\
    & 0\leq m\leq p(\theta),Q(\theta)\leq 1.
\end{align*}
Here, we scaled the objective function by $\eta^{-1}$. We know that $\Lambda\geq 0$.

The objective function is optimized at at least one of the extreme points of the constraint set. All extreme points such that at least one of $m=p(\theta)$ and $m=Q(\theta)$ holds and the corresponding values of the objective function are listed in \Cref{tab:extreme-and-value}. For convenience, we label the values $V_1$--$V_6$. The solution to the problem is the extreme point with the highest corresponding objective function value. 

\begingroup
\renewcommand*{\arraystretch}{1.2}
\begin{table}[htbp]
\centering
\begin{tabular}{cr}
\toprule
Extreme point $(p(\theta),Q(\theta),m)$                               & \multicolumn{1}{c}{Value of objective}                   \\ \midrule
$(P,1,P)$                                                   & $(v-\Lambda)P+\theta -\Eta=:V_1$              \\
$(0,1,0)$                                                   & $\theta -\Eta=:V_2$                           \\
$(P,P,P)$                                                   & $(v-\Lambda+\theta -\Eta)P=:V_3$              \\
$(P,0,0)$                                                   & $-\Lambda P=:V_4$                              \\
$(0,uP,0)$                                                  & $(\theta -\Eta)uP=:V_5$                       \\
$\left(\frac{u}{1+u}P,\frac{u}{1+u}P,\frac{u}{1+u}P\right)$ & $(v-\Lambda+\theta -\Eta)\frac{u}{1+u}P=:V_6$ \\ \bottomrule
\end{tabular}
\caption{Extreme points and corresponding values of the objective function}
\label{tab:extreme-and-value}
\end{table}
\endgroup

\paragraph{Case 1: $\Eta\leq \theta ,\ \Lambda\leq v$.}\hspace{0em}

In this case, $v-\Lambda,\ \theta -\Eta\geq0$. For any $a\in[0,P], b\in[0,1]$, $(v-\Lambda)a+(\theta -\Eta)b\leq (v-\Lambda)P+\theta -\Eta=V_1$. Therefore, $V_1\geq V_2,V_3,V_5,V_6\geq 0\geq V_4$, and the solution is $(p(\theta),Q(\theta))=(P,1)$.

\paragraph{Case 2: $\Eta\leq \theta ,\ \Lambda\geq v$.}\hspace{0em}

In this case, $v-\Lambda\leq 0,\ \theta -\Eta\geq0$. For any $a\in[0,P], b\in[0,1]$, $(v-\Lambda)a+(\theta -\Eta)b\leq \theta -\Eta=V_2$. Therefore, $V_2\geq V_1,V_3,V_5,V_6$. Also, $V_2\geq 0\geq V_4$, and hence the solution is $(p(\theta),Q(\theta))=(0,1)$.

\paragraph{Case 3: $\Eta\geq \theta ,\ \Lambda+\Eta\leq \theta +v$.}\hspace{0em}

In this case, it is easy to verify that $v-\Lambda\geq 0$, $\theta -\Eta\leq 0$, and $v-\Lambda+\theta -\Eta\geq 0$. Therefore, $V_2, V_4, V_5\leq 0\leq V_3$. Furthermore, 
\begin{equation*}
    V_1=(v-\Lambda)P+\theta -\Eta\leq (v-\Lambda)P+(\theta -\Eta)P=V_3,
\end{equation*}
and
\begin{equation*}
    V_6 = (v-\Lambda+\theta -\Eta)\frac u{1+u}P\leq (v-\Lambda+\theta -\Eta)P=V_3.
\end{equation*}
Hence, $V_3$ is the largest, and the solution is $(p(\theta),Q(\theta))=(P,P)$.

\paragraph{Case 4: $\Eta\geq \theta ,\ \Eta-\frac \Lambda u\geq \theta +v$.}\hspace{0em}

In this case, it is easy to verify that $\theta -\Eta\leq 0$ and $v-\Lambda+\theta -\Eta\leq 0$. Therefore, $V_1\leq V_3\leq V_6\leq 0$, $V_2\leq V_5\leq 0$.

Since $\Eta-\frac \Lambda u\geq \theta +v$, we know that $\Eta-\frac \Lambda u- \theta \geq v$, and that
\begin{equation*}
    V_4-V_5=\left(\Eta-\theta -\frac\Lambda u\right)uP\geq vuP\geq 0\implies  V_4\geq V_5.
\end{equation*}
Furthermore, $\Eta-\frac \Lambda u- \theta - v\geq 0$, and
\begin{align*}
    V_4-V_6&=\left(-\frac{1+u}{u}\Lambda - v+\Lambda-\theta +\Eta\right)\frac u{1+u}P\\&=\left(\Eta-\frac\Lambda u-\theta -v \right)\frac{u}{1+u}P\\&\geq 0 \implies V_4\geq V_6.
\end{align*}

Hence, $V_4$ is the largest, and the solution is $(p(\theta),Q(\theta))=(P,0)$.

\paragraph{Case 5: $\Eta\geq \theta ,\ \Eta-\frac\Lambda u\leq \theta -\frac vu$.}\hspace{0em}

In this case, it is easy to verify that $v-\Lambda\leq 0$ and $\theta -\Eta\leq 0$. Therefore, $V_3\leq V_6\leq 0$, $V_1\leq V_2\leq V_5\leq 0$.

Since $\Eta-\frac\Lambda u\leq \theta -\frac vu$ and therefore $\theta -\Eta+\frac\Lambda u\geq \frac vu$, we have
\begin{equation*}
    V_5-V_4=\left(\theta -\Eta+\frac\Lambda u\right)uP \geq \frac vu uP \geq 0\implies V_5\geq V_4,
\end{equation*}
and 
\begin{align*}
    V_5-V_6&=[(1+u)(\theta -\Eta)-(v-\Lambda+\theta -\Eta)]\frac{u}{1+u}P\\
    &=\left(\theta -\Eta+\frac \Lambda u-\frac vu\right)\frac{u^2}{1+u}P\\
    &\geq \left(\frac vu-\frac vu\right)\frac{u^2}{1+u}P\\
    &=0\implies V_5\geq V_6.
\end{align*}

Hence, $V_5$ is the largest, and the solution is $(p(\theta),Q(\theta))=(0,uP)$.

\paragraph{Case 6: $\Eta\geq \theta ,\ \Lambda+\Eta\geq \theta +v,\ \theta -\frac vu\leq \Eta-\frac \Lambda u\leq \theta +v$.}\hspace{0em}

In this case, it is easy to verify that $\theta -\Eta\leq 0$ and $v-\Lambda+\theta -\Eta\leq 0$. Therefore, $V_1\leq V_3\leq V_6\leq 0$, $V_2\leq V_5\leq 0$.

$\theta -\frac vu\leq \Eta-\frac \Lambda u\leq \theta +v$ implies $\theta -\Eta+v+\frac \Lambda u\geq 0$ and $\frac vu-\frac \Lambda u-\theta +\Eta\geq 0$. Therefore, 
\begin{align*}
    V_6-V_4&=\left(v-\Lambda+\theta -\Eta+\frac{1+u}{u}\Lambda\right)\frac{u}{1+u}P\\
    &=\left(\theta -\Eta+v+\frac{\Lambda}{u}\right)\frac{u}{1+u}P\\
    &\geq 0\implies V_6\geq V_4,
\end{align*}
and 
\begin{align*}
    V_6-V_5&=[v-\Lambda+\theta -\Eta-(1+u)(\theta -\Eta)]\frac{u}{1+u}P\\
    &=\left(\frac vu-\frac \Lambda u-\theta +\Eta\right)\frac{u^2}{1+u}P\\
    &\geq 0\implies V_6\geq V_5.
\end{align*}

Hence, $V_6$ is the largest, and the solution is $(p(\theta),Q(\theta))=(\frac u{1+u}P,\frac u{1+u}P)$.

\vspace{1em}
In summary, the solution to problem \eqref{eq:lagrangian} is as listed in \eqref{eq:beautiful-graph-sol}.
\end{proof}

\begin{figure}[htbp]
  \centering
  \begin{subfigure}{0.49\textwidth}
    \centering
\begin{tikzpicture}[scale=.6, xscale=9, yscale=6, every node/.style={font=\small, scale=.7}]
\path[use as bounding box] (0,-.075) rectangle (.75,1.575);
\def\v{0.4}        
\def\u{.6}        
\def\Finv{.5}       
\def\mumax{1.5}    

\coordinate (O)  at (0,0);
\coordinate (A)  at (0,\Finv);       
\coordinate (Av) at (\v,\Finv);      
\coordinate (L)  at (\v,0);          
\coordinate (R)  at (1,0);           
\coordinate (R1) at (1,\Finv);           
\coordinate (TopLeft) at (0,\mumax);
\coordinate (TopRight) at (1,\mumax);

\coordinate (P1) at (0,\Finv+\v);
\coordinate (P2) at (\mumax*\u-\Finv*\u-\v*\u,\mumax);         
\coordinate (P3) at (\mumax*\u-\Finv*\u+\v,\mumax);

\fill[red!25] (O) -- (L) -- (Av) -- (A) -- cycle;

\fill[violet!20] (L) -- (R) -- (R1) -- (Av) -- cycle;

\fill[orange!35] (A) -- (P1) -- (Av) -- cycle;

\fill[yellow!35] (P1) -- (TopLeft) -- (P2) -- cycle;

\fill[blue!20] (Av) -- (R1) -- (TopRight) -- (P3) -- cycle;

\fill[teal!20] (P1) -- (Av) -- (P3) -- (P2) -- cycle;

\draw[->, >=stealth] (-0.02,0) -- (1.05,0) node[below right]{$\Lambda$};
\draw[->, >=stealth] (0,-0.02) -- (0,\mumax+0.05) node[left]{$\Eta$};

\draw[line width=1pt, gray!70] (P1) -- (Av) node[midway, font=\scriptsize, rotate=-33.7, above]{$\text{slope}=-1$};

\draw[line width=1pt, gray!70] (P1) -- (P2)
node[midway, font=\scriptsize, rotate=48, below]{$\text{slope}=u^{-1}$};

\draw[line width=1pt, gray!70] (Av) -- (P3) node[midway, font=\scriptsize, rotate=48, below]{$\text{slope}=u^{-1}$};

\draw[line width=1pt, gray!70] (A) -- (R1);
\draw[line width=1pt, gray!70] (L) -- (Av);

\draw (0,\Finv) node[left]{$\theta\ $} -- +(-0.02,0);
\draw (0,\Finv+\v) node[left]{$\theta+v\ $} -- +(-0.02,0);
\draw (\v,0) node[below]{$v$} -- +(0,0.02);
\node[below left] at (O) {$0$};

\node[red!60!black,font=\Large] at (\v/2,\Finv/2) {$(P,1)$};
\node[violet!60!black,font=\Large] at (\v/2+0.5,\Finv/2) {$(0,1)$};
\node[orange!50!black,font=\Large] at (\v/3,\Finv+\v/4) {$(P,P)$};
\node[yellow!30!red!60!black,font=\Large] at (\v/3.8,\Finv+\v*2) {$(P,0)$};
\node[blue!80!black,font=\Large] at (\v/1.5+0.5,\Finv+\v/2) {$(0,uP)$};
\node[teal!70!black,font=\Large] at (\v*.9,\v*1.1+\Finv) {$\left(\tfrac{u}{1+u}P,\tfrac{u}{1+u}P\right)$};
\end{tikzpicture}\hspace{35pt}
\caption{}
\label{fig:beautiful-graph-0}
  \end{subfigure}
  \hfill
\begin{subfigure}{0.49\textwidth}
    \centering
\begin{tikzpicture}[scale=.6, xscale=9, yscale=4.5, every node/.style={font=\small, scale=.7}]
\path[use as bounding box] (0,-.1) rectangle (.7,2.1);
\def\v{0.4}        
\def\u{.6}        
\def\Finv{1}       
\def\mumax{2}    

\coordinate (O)  at (0,0);
\coordinate (A)  at (0,\Finv);       
\coordinate (Av) at (\v,\Finv);      
\coordinate (L)  at (\v,0);          
\coordinate (R)  at (1,0);           
\coordinate (R1) at (1,\Finv);           
\coordinate (TopLeft) at (0,\mumax);
\coordinate (TopRight) at (1,\mumax);

\coordinate (P1) at (0,\Finv+\v);
\coordinate (P2) at (\mumax*\u-\Finv*\u-\v*\u,\mumax);         
\coordinate (P3) at (\mumax*\u-\Finv*\u+\v,\mumax);

\begin{scope}[opacity=0.6]
\fill[red!25] (O) -- (L) -- (Av) -- (A) -- cycle;

\fill[violet!20] (L) -- (R) -- (R1) -- (Av) -- cycle;

\fill[orange!35] (A) -- (P1) -- (Av) -- cycle;

\fill[yellow!35] (P1) -- (TopLeft) -- (P2) -- cycle;

\fill[blue!20] (Av) -- (R1) -- (TopRight) -- (P3) -- cycle;

\fill[teal!20] (P1) -- (Av) -- (P3) -- (P2) -- cycle;
\end{scope}

\draw[->, >=stealth] (-0.02,0) -- (1.05,0) node[below right]{$\Lambda$};
\draw[->, >=stealth] (0,-0.02) -- (0,\mumax+0.05) node[left]{$\Eta+1-\theta$};

\draw[line width=1pt, gray!30] (P1) -- (Av) node[midway, font=\scriptsize, fill opacity=0.5, rotate=-26.565, above]{};

\draw[line width=1pt, gray!30] (P1) -- (P2)
node[midway, font=\scriptsize, fill opacity=0.5, rotate=39.8, below]{};

\draw[line width=1pt, gray!30] (Av) -- (P3) node[midway, font=\scriptsize, fill opacity=0.5, rotate=39.8, below]{};

\draw[line width=1pt, gray!30] (A) -- (R1);
\draw[line width=1pt, gray!30] (L) -- (Av);

\draw (0,\Finv) node[left]{$1\ $} -- +(-0.02,0);
\draw (0,\Finv+\v) node[left]{$1+v\ $} -- +(-0.02,0);
\draw (\v,0) node[below]{$v$} -- +(0,0.02);
\node[below left] at (O) {$0$};

\begin{scope}[opacity=0.2]
\node[red!60!black!, font=\Large] at (\v/2,\Finv/2) {$(P,1)$};
\node[violet!60!black, font=\Large] at (\v/2+0.5,\Finv/2) {$(0,1)$};
\node[orange!50!black, font=\Large] at (\v/3,\Finv+\v/4) {$(P,P)$};
\node[yellow!30!red!60!black, font=\Large] at (\v/3.8,\Finv+\v*2) {$(P,0)$};
\node[blue!80!black, font=\Large] at (\v/1.5+0.5,\Finv+\v/2) {$(0,uP)$};
\node[teal!70!black, font=\Large] at (\v*.9,\v*1.1+\Finv) {$\left(\tfrac{u}{1+u}P,\tfrac{u}{1+u}P\right)$};
\end{scope}

\def\mechmu{0.8}
\coordinate (ONE1)  at (\v/3,\mechmu);
\coordinate (TOP1)  at (\v/3,1);
\coordinate (MID1)  at (\v/3,1+\v*2/3);
\coordinate (BOT1)  at (\v/3,1+\v+\v/\u/3);
\coordinate (ZERO1)  at (\v/3,\mechmu+1);
\coordinate (HALFWAY1)  at (\v/3,\mechmu+.3);


\draw[line width=1.5pt, black] (ONE1) -- (ZERO1);

\node[circle, fill=black, inner sep=1.5pt, label=below:$(\Lambda\comma\Eta)$] at (ONE1) {};
\node[circle, fill=black, inner sep=1.5pt, label=above right:$\hspace{-20pt}(\Lambda\comma\Eta+1)$] at (ZERO1) {};
\node[circle, fill=black, inner sep=1.5pt] at (HALFWAY1) {};

\coordinate (ARRFROM)  at (\v/3+0.05+0.15,1+\v+0.3);
\coordinate (ARRTO)  at (\v/3+0.05,1+\v+.1);


\draw [decorate,black,decoration={brace,amplitude=6pt, mirror},line width=.5pt] 
    (ZERO1) -- (HALFWAY1) 
    node[midway, left] {$\theta\hspace{8pt}$};

\coordinate (ONE2) at (\v+.2,.6);
\coordinate (ZERO2) at (\v+.2,1.6);

\draw[line width=1.5pt, black] (ONE2) -- (ZERO2);

\node[circle, fill=black, inner sep=1.5pt] at (ONE2) {};
\node[circle, fill=black, inner sep=1.5pt] at (ZERO2) {};

\end{tikzpicture}\hspace{45pt}
\caption{}
\label{fig:beautiful-graph-1}
  \end{subfigure}
  \caption{Illustration of direct mechanisms.}
  \label{fig:beautiful-graph}
\end{figure}

\Cref{fig:beautiful-graph-0} provides an illustration for the results in \Cref{prop:beautiful-graph} in $(\Lambda,\Eta)$ space.

The second step, finding a combination of $\lambda$ and $\eta$ such that $(p_{\lambda,\eta},Q_{\lambda,\eta})$ satisfies the capacity and obedience constraint, is not useful for characterizing the optimal mechanism.  

By \Cref{prop:beautiful-graph}, there are two classes of mechanisms that are potentially optimal, illustrated by the two line segments in \Cref{fig:beautiful-graph-1}. When $\Lambda\leq v$, the solution to problem \eqref{eq:lagrangian} is a four-segment mechanism that takes the following form:
\begin{equation}
\label{eq:solultion-pq}
    (p(\theta),Q(\theta))=\switch{
        &(P,1), &\theta_3<\theta&\leq1,\\
        &(P,P), &\theta_2<\theta&\leq\theta_3,\\
        &\left(\frac u{1+u}P,\frac u{1+u}P\right), &\theta_1<\theta&\leq\theta_2,\\
        &(P,0), &0\leq\theta&\leq\theta_1,
    }
\end{equation}
where $P\in[\kappa,1]$, and
\begin{align*}
    \theta_1&=\Eta-v-\frac\Lambda u,\\
    \theta_2&=\Eta-v+\Lambda,\\
    \theta_3&=\Eta.
\end{align*}
Each segment could be empty. This is the class of mechanisms that the optimal direct mechanism described in \Cref{thm:optimal-direct} falls into. When $\Lambda>v$, the solution to problem \eqref{eq:lagrangian} is a mechanism that takes another four-segment form:
\begin{equation*}
    (p(\theta),Q(\theta))=\switch{
        &(0,1), &\theta_3<\theta&\leq1,\\
        &(0,uP), &\tilde\theta_2<\theta&\leq\theta_3,\\
        &\left(\frac u{1+u}P,\frac u{1+u}P\right), &\theta_1<\theta&\leq\tilde\theta_2,\\
        &(P,0), &0\leq\theta&\leq\theta_1,
    }
\end{equation*}
where $P\in[\kappa,1]$, and
\begin{align*}
    \tilde\theta_2&=\Eta+\frac vu-\frac\Lambda u.
\end{align*}
Each segment could be empty. \Cref{lem:no-weird-case} shows that it is sufficient to consider the first class of mechanisms.

\begin{lemma}
    \label{lem:no-weird-case}
    Let $(p,Q)$ represent the optimal direct mechanism that solves the sender's problem \eqref{eq:sender-prob}. There exists $\lambda,\eta$ satisfying $\Lambda=\lambda/\eta\leq v$ such that $(p,Q)$ solves problem \eqref{eq:lagrangian} given $P=\sup_{\htt\in\Theta}p(\htt)$.
\end{lemma}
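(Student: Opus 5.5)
The plan is to argue by contradiction. By \Cref{lem:lagrangian}, together with the observation that obedience binds so the multipliers can be taken with $\eta>0$, the optimal $(p,Q)$ solves problem \eqref{eq:lagrangian} for some $\lambda\ge0,\eta>0$ at $P=\sup_{\htt}p(\htt)$. I would suppose that every such supporting pair has $\Lambda=\lambda/\eta>v$. By \Cref{prop:beautiful-graph} the mechanism then belongs to the second class:
\[
(0,1)\text{ on }(\theta_3,1],\quad (0,uP)\text{ on }(\tilde\theta_2,\theta_3],\quad \left(\tfrac{u}{1+u}P,\tfrac{u}{1+u}P\right)\text{ on }(\theta_1,\tilde\theta_2],\quad (P,0)\text{ on }[0,\theta_1].
\]
The structural feature to exploit is that all treatment in this mechanism goes either to agents who are never recommended (the bottom group) or is spent as the compensation device for the target group. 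None of it reaches agents with $Q=1$ or $Q=uP$.

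The core step is to construct a feasible competitor that recommends strictly more agents, which contradicts optimality. I would first try the comparison announced in the proof sketch: the same $Q$, but with the uniform allocation $p'\equiv\kappa$. Then $P'=\kappa$, so (IC$_1$) and (IC$_2$) hold trivially because $Q\ge0$, and capacity binds. The objective $\int Q\,\D F$ is unchanged, so the remaining task is to show that obedience becomes strictly slack. Once it is slack, $Q$ can be raised on a positive-measure set and the payoff strictly increases.

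The change in the obedience integrand is $v\bigl[\min\{\kappa,Q(\theta)\}-\min\{p(\theta),Q(\theta)\}\bigr]$. It equals $v\kappa>0$ on $(\theta_3,1]$, equals $v\min\{\kappa,uP\}>0$ on $(\tilde\theta_2,\theta_3]$, and is zero on the bottom group. On the target group it may be negative, but only if $\tfrac{u}{1+u}P>\kappa$.

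The main obstacle is controlling this possible loss on the target group. My first attempt is to use the binding capacity constraint, $\kappa = PF(\theta_1)+\tfrac{u}{1+u}P\,[F(\tilde\theta_2)-F(\theta_1)]$, to bound the target group's treated mass by $\kappa$. I would then pair this loss against the gains on the groups recommended with probability $1$ or $uP$. If this accounting fails in general, I would fall back on a local perturbation in $P$ rather than a jump to $P=\kappa$. Lower $P$ to $P-\varepsilon$ and keep the bottom and target groups on their binding IC schedules, so their treatment falls by $\varepsilon F(\theta_1)+\tfrac{u}{1+u}\varepsilon[F(\tilde\theta_2)-F(\theta_1)]$. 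Reassign the freed capacity to top types with $Q=1$, which keeps $p\le P-\varepsilon$ for small $\varepsilon$ and is IC-free there.

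This perturbation has two effects on obedience. It loses only $v\tfrac{u}{1+u}\varepsilon[F(\tilde\theta_2)-F(\theta_1)]$ on the target group and lowers $Q$ on $(\tilde\theta_2,\theta_3]$ by $u\varepsilon$; since $\theta<\tau$ there, that lowering relaxes obedience. Against this, it gains $v$ times the full freed mass on types that are recommended for sure. The net effect is that obedience becomes strictly slack at first order, and the slack can then be spent by raising $Q$. In Lagrangian terms, this says that the shadow price $\Lambda>v$ is inconsistent with the first-order condition in $P$.

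Finally, I would treat the degenerate subcases separately. If the top groups have measure zero, then the only recommended agents are target-group agents, and the obedience requirement $\theta+v\ge\tau$ on them combined with the nondegenerate-interior assumption should force $Q=0$ almost everywhere, which is the excluded zero-payoff case. If instead the computation yields $\Lambda=v$ exactly, the solution lies on the boundary of both classes and is already covered by the statement.
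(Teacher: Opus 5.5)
\textbf{Main route: the decisive inequality is missing.} Your primary route is the paper's own: keep $Q$, replace $p$ by $p'\equiv\kappa$, and show that obedience becomes strictly slack. But you stop at exactly the step that carries the proof, and the bound you propose for it is too weak. Write $a=F(\theta_1)$ and $b=F(\tilde\theta_2)$. When $\frac{u}{1+u}P>\kappa$, the gain on the two top groups is $v\kappa(1-b)$, and the loss on the target group is $v\bigl(\frac{u}{1+u}P-\kappa\bigr)(b-a)$. So you need
\[
\kappa(1-a)>\tfrac{u}{1+u}P\,(b-a).
\]
Bounding the target group's treated mass $\frac{u}{1+u}P(b-a)$ by $\kappa$ does not give this, because $\kappa(1-a)<\kappa$ when $a>0$. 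The missing point is that capacity also charges the bottom group: $\frac{u}{1+u}P(b-a)\le \kappa-Pa$. Then $\kappa(1-a)-\frac{u}{1+u}P(b-a)\ge a(P-\kappa)>0$, using $P>\kappa$ in this case; the paper instead derives the lower bound $Pa(1-b)$. Strictness therefore comes from the bottom group having positive mass. This is why the paper first reduces to $0<\theta_1<\tilde\theta_2<1$. You never isolate this condition, and you do not treat $F(\theta_1)=0$. In that case the uniform reallocation can yield only a weak improvement; the difference is $\kappa-\frac{u}{1+u}Pb\ge 0$, with equality if capacity binds. The paper handles that case by reinterpreting the mechanism as one with supremum $\frac{u}{1+u}P$.

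\textbf{Fallback and degenerate case.} Your local perturbation does not close the gap. Moving the target and $(0,uP)$ groups along their binding (IC$_1$) schedules lowers $Q$ there by $\frac{u}{1+u}\varepsilon$ and $u\varepsilon$, so $\int Q\,\D F$ falls at first order. Showing that obedience becomes slack is then not enough: you must also show that the recommendations recovered from the slack outweigh this loss. That comparison is exactly what ``$\Lambda>v$ is inconsistent with the first-order condition in $P$'' would have to establish, and you assert it rather than prove it. The repair is to keep $Q$ unchanged, which (IC$_1$) allows because it only bounds $Q$ from below at the lower $P$. The objective is then unchanged, and after reassigning the freed capacity to the top groups, obedience gains exactly $v\varepsilon F(\theta_1)$; again the bottom group drives strictness. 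Finally, your dismissal of the case with no top groups is incorrect. Obedience is an average constraint, not the pointwise condition $\theta+v\ge\tau$. With $\theta_1$ large one can have $\mathbb E[\theta\mid\theta>\theta_1]+v\ge\tau$ while $\tau>\mathbb E[\theta]+v\kappa$, so $Q=0$ is not forced. The correct observation, as in the paper, is that such a mechanism already has the first-class form with $\theta_2=\theta_3=1$.
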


\begin{proof}
    To prove this, it is sufficient to prove the following claim.

    \begin{claim*}
        For a given $P$, let $\pi=(p,Q)$ be the mechanism that solves problem \eqref{eq:fix-P-max-over-p-Q}. If all corresponding pairs of multipliers $(\lambda,\eta)$ satisfy $\Lambda:=\lambda/\eta>v$ (i.e., $(p,Q)$ falls only into the second class), then there exists another feasible mechanism $(p',Q')$ that strictly dominates $(p,Q)$ and has corresponding Lagrangian multipliers $\lambda',\eta'$ such that $\Lambda':=\lambda'/\eta'\leq v$.
    \end{claim*}
    We move on to prove this claim. When $\Lambda>v$, by \Cref{prop:beautiful-graph}, $(p,Q)$ must take the following form:
    \begin{equation*}
    (p(\theta),Q(\theta))=\switch{
        &(0,1), &\theta_3<\theta&\leq1,\\
        &(0,uP), &\tilde\theta_2<\theta&\leq\theta_3,\\
        &\left(\frac u{1+u}P,\frac u{1+u}P\right), &\theta_1<\theta&\leq\tilde\theta_2,\\
        &(P,0), &0\leq\theta&\leq\theta_1,
    }
\end{equation*}
    We assume $0<\theta_1<\tilde\theta_2<1$, i.e., the fourth segment and one of the first two segments is of strictly positive measure. This is without loss of generality because of the following reasons: when the first two segments are measure zero, the mechanism also falls into the first class; when the fourth segment is measure zero, the mechanism is equivalent to a mechanism with $p(\theta)=0$ in the fourth segment, in which $\sup_{\theta\in\Theta} p(\theta)=\frac{u}{1+u}P<P$, and is thus weakly dominated by the optimal mechanism of \eqref{eq:fix-P-max-over-p-Q} when $P$ is replaced by $\frac u{1+u}P$. Since $F$ has positive density almost everywhere, this assumption implies $F(\theta_1)>0$, $F(\tilde\theta_2)<1$, and $\int_\Theta Q(\theta)\D F(\theta)>0$.
    
    Now, consider the following mechanism $\tilde\pi'=(\tilde{p}',\tilde{Q}')$:
    \begin{equation*}
        \tilde{p}'(\theta)\equiv\kappa,\ \tilde{Q}'(\theta)=Q(\theta),\ \forall \theta\in\Theta.
    \end{equation*}
    
    It is easy to verify that $(\tilde{p}',\tilde{Q}')$ also satisfies capacity and IC constraints in problem \eqref{eq:sender-prob}, with $\sup_{\theta\in\Theta}p(\theta)=\kappa$, and yields exactly the same amount of recommendation as $(p,Q)$.

    Consider the obedience constraint. When $\kappa>\frac{u}{1+u}P$, it is easy to verify that $\min\{\tilde p'(\theta),\tilde Q'(\theta)\}\geq \min\{p(\theta),Q(\theta)\}$ for all $\theta\in\Theta$, and the inequality is strict when $\theta\in(\tilde\theta_2,1]$ which is of strictly positive measure. Therefore, we have
    \begin{align*}
        \int_\Theta \min\{\tilde p'(\theta),\tilde Q'(\theta)\}\D F(\theta)>\int_\Theta \min\{p(\theta),Q(\theta)\}\D F(\theta).
    \end{align*}
    
    When $\kappa\leq \frac{u}{1+u}P$, the capacity constraint gives
    \begin{align*}
        \int_\Theta p(\theta)\D F(\theta)\leq \kappa\implies PF(\theta_1)+\frac u{1+u}P(F(\tilde\theta_2)-F(\theta_1))\leq \kappa.
    \end{align*}
    This implies
    \begin{align*}
        &\int_\Theta \min\{\tilde p'(\theta),\tilde Q'(\theta)\}\D F(\theta)-\int_\Theta \min\{p(\theta),Q(\theta)\}\D F(\theta)\\
        =~&\kappa (1-F(\theta_1))-\frac{u}{1+u}P(F(\tilde\theta_2)-F(\theta_1))\\
        \geq~&\left[PF(\theta_1)+\frac u{1+u}P(F(\tilde\theta_2)-F(\theta_1))\right](1-F(\theta_1))-\frac{u}{1+u}P(F(\tilde\theta_2)-F(\theta_1))\\
        =~&PF(\theta_1)\left(1-\frac u{1+u}F(\tilde\theta_2)-\frac 1{1+u}F(\theta_1)\right)\\
        \geq~&PF(\theta_1)(1-F(\tilde\theta_2))\\
        >~&0.
    \end{align*}
    Hence,
    \begin{align*}
        \int_\Theta \min\{\tilde p'(\theta),\tilde Q'(\theta)\}\D F(\theta)>\int_\Theta \min\{p(\theta),Q(\theta)\}\D F(\theta)
    \end{align*}
    holds for both $\kappa>\frac u{1+u}P$ and $\kappa\leq\frac u{1+u}P$. Since $\tilde Q'=Q$, this further indicates that
    \begin{align*}
        &\mu_{\tilde\pi'}=\frac{\displaystyle\int_\Theta \theta \tilde Q'(\theta)\D F(\theta)+v\int_\Theta \min\{\tilde p'(\theta),\tilde Q'(\theta)\}\D F(\theta)}{\displaystyle\int_\Theta\tilde Q'(\theta) \D F(\theta)}\\
        >~&\mu_\pi=\frac{\displaystyle\int_\Theta \theta Q(\theta)\D F(\theta)+v\int_\Theta \min\{ p(\theta),Q(\theta)\}\D F(\theta)}{\displaystyle\int_\Theta Q(\theta) \D F(\theta)}\\
        \geq~&\utt,
    \end{align*}
    i.e., the obedience constraint of $(\tilde p',\tilde Q')$ is strictly slack. Therefore, it is feasible and must be strictly dominated by the optimal solution to \eqref{eq:fix-P-max-over-p-Q} given $P=\kappa$, which we denote by $(p',Q')$. 

    Now, $\sup_{\theta\in\Theta}p'(\theta)=\kappa$ implies that there exists a pair of corresponding multipliers $(\lambda',\eta')$ satisfying $\Lambda'=\lambda'/\eta'=0<v$. Also, $(p',Q')$ strictly dominates $(\tilde p',\tilde Q')$ and therefore $(p,Q)$ since $\tilde Q'=Q$. This proves the claim and hence the lemma.
\end{proof}

A remark here is that for a given $P$, the direct mechanism that solves problem \eqref{eq:fix-P-max-over-p-Q} may still correspond to a $\Lambda>v$. This is a sign that the supremum of $p(\theta)$, $P$, is chosen to be too large. 



\Cref{lem:no-weird-case} implies that the optimal direct mechanism must take the form of \eqref{eq:solultion-pq}. Applying the result of \Cref{lem:A-is-minpq} to recover the recommendation rule $q$ from the ex-ante probability of recommendation $Q$ in \eqref{eq:solultion-pq} yields the exact four-segment form of the optimal direct mechanism as presented in \Cref{thm:optimal-direct}. The last step is to exclude the case where $\sup_{\theta\in\Theta}p(\theta)=\kappa$.

\begin{proposition}
    \label{prop:indirect-not-optimal}
    If an optimal mechanism to \eqref{eq:sender-prob}, $(p^*,Q^*)$, satisfies $\int_\Theta Q^*(\theta)\D F(\theta)>0$, then $\sup_{\theta\in\Theta}p^*(\theta)>\kappa$.
\end{proposition}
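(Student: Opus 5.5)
We argue by contradiction with a local perturbation. Suppose $(p^*,Q^*)$ is optimal for \eqref{eq:sender-prob}, $\int_\Theta Q^*\,\D F>0$, and $\sup_\theta p^*(\theta)=\kappa$. By \Cref{lem:cap-bind} we may take capacity to bind, so $p^*=\kappa$ almost everywhere. Then (IC$_1$) reads $Q^*\geq 0$ and imposes nothing, so $(p^*,Q^*)$ solves a pure recommendation problem. By \Cref{lem:no-weird-case} and \Cref{prop:beautiful-graph}, with $P=\kappa$ the target segment must be empty, so the mechanism is the separated benchmark \eqref{eq:separated-design}. It has cutoffs $\theta^S-v$ and $\theta^S$, and obedience binds, i.e.\ $\mu=\tau$. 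Because the pool is nonempty and has mean exactly $\tau$, while the marginal recommended agent has quality about $\theta^S$, we get $\theta^S<\tau$.

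\textbf{Step 1 (perturbation).} Fix a small $b>0$ and let $B$ be the set of bottom types just below $\theta^S-v$ with $F$-mass $b$. Move $B$ into a target group with $(p,Q)=\bigl(\tfrac{u}{1+u}P',\tfrac{u}{1+u}P'\bigr)$, recommended exactly when treated. Give every other type $p=P'$ and keep its treated and untreated recommendation rule unchanged. Choose $P'$ to keep capacity binding:
\[
P'\Bigl(1-\tfrac{b}{1+u}\Bigr)=\kappa,\qquad\text{so}\qquad P'-\kappa=\tfrac{\kappa b}{1+u}+o(b)>0 .
\]
(IC$_1$) holds: for $B$ it holds with equality, for bottom types because $p=P'$, and for recommended types a fortiori. (IC$_2$) holds trivially.

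\textbf{Step 2 (first-order accounting).} The recommended mass rises by $\tfrac{u}{1+u}\kappa b+o(b)$, all coming from $B$. The left-hand side of obedience in \eqref{eq:sender-prob} changes by two terms, each to first order in $b$.
\begin{itemize}
\item[(a)] Each agent in $B$ is recommended only when treated, so his recommended quality is about $\theta^S$. The new recommendations from $B$ therefore contribute $(\theta^S-\tau)\tfrac{u}{1+u}\kappa b$, which is negative.
\item[(b)] Top types, who are recommended regardless of treatment, now have a larger treated share. This contributes $+v(P'-\kappa)(1-F(\theta^S))=v\tfrac{\kappa b}{1+u}(1-F(\theta^S))$.
\end{itemize}
Upper-middle types are recommended iff treated, so their $Q$ rises with $p$. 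Each such added recommendation has quality $\theta+v>\theta^S$, and these agents only help.

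\textbf{Step 3 (restore obedience).} If obedience is now violated, remove recommendation from the marginal treated agents near $\theta^S-v$, for instance by lowering $q(\cdot,\cdot,1)$ there. Each unit of recommended mass removed at quality about $\theta^S$ relaxes obedience by $\tau-\theta^S>0$. This removal never harms IC, since it only reduces $Q$ of agents with $p=P'$. The term in (a) is therefore exactly offset by removing the same mass $\tfrac{u}{1+u}\kappa b$ of marginal recommendations, and the net recommended mass from $B$ versus the removed marginal agents is zero to first order. Term (b) is slack that buys an additional
\[
\frac{v\kappa b\,(1-F(\theta^S))}{(1+u)(\tau-\theta^S)}>0
\]
of recommendation, so the sender's payoff strictly increases for small $b$. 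If the top segment is empty, the same gain comes instead from the upper-middle types, whose recommended mass rises by $P'-\kappa$ at quality above $\theta^S$. Either way this contradicts optimality, so $\sup_\theta p^*(\theta)>\kappa$.

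\textbf{Main obstacle.} The delicate step is making the marginal-substitution argument rigorous. It must show that types near $\theta^S-v$ exist with positive density, so that both $B$ and the removed marginal mass lie at quality within $o(1)$ of $\theta^S$. It must also handle edge cases: $\theta^S-v\leq 0$, where there is no bottom segment, and the case where the removal needed exceeds the available treated upper-middle mass. For $\theta^S-v\le 0$, I would first try taking $B$ from the lowest recommended types instead. Their recommended quality is again the marginal quality, so the same first-order cancellation should go through.
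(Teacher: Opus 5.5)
\textbf{Gap: the case $\theta^S<v$.} Here the benchmark has no bottom segment. This case does occur in the nondegenerate regime: with $F$ uniform, $v=1$, $\kappa=1/2$ and $\tau=1.05$, the benchmark cutoff is $\theta^S=0.1$.

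Your patch rests on the claim that the lowest recommended types sit at the marginal quality, and that claim is false. They are recommended when treated, at quality close to $v>\theta^S$. Moving a mass $b$ of them into the target group \emph{lowers} their $Q$ from $\kappa$ to about $\frac{u}{1+u}\kappa$. So $B$ loses recommended mass $\frac{\kappa b}{1+u}$ of above-marginal quality. After rebalancing at the margin, this is a first-order loss of $\frac{\kappa b\,(v-\theta^S)}{(1+u)(\tau-\theta^S)}$, not a cancellation. Moreover, no treated agent has quality near $\theta^S$, so Step 3 must adjust untreated top types just above $\theta^S$ instead.

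What you need is to show that the top and upper-middle gains outweigh this loss. They do: the first-order net change is
\[
\frac{\kappa b}{(1+u)(\tau-\theta^S)}\Bigl[v\bigl(1-F(\theta^S)\bigr)+\int_0^{\theta^S}(\theta+v-\theta^S)\,\D F(\theta)-(v-\theta^S)\Bigr]
=\frac{\kappa b}{(1+u)(\tau-\theta^S)}\int_\Theta\min\{\theta,\theta^S\}\,\D F(\theta).
\]
This is positive because $\theta^S>0$. If $\theta^S\le 0$, the benchmark would recommend everyone, contradicting $\tau>\int_\Theta\theta\,\D F+v\kappa$.

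\textbf{Comparison with the paper.} When $\theta^S-v>0$, your argument is correct and is a primal counterpart of the paper's proof. The paper fixes $\Lambda=0$, reads off $\phi=v$ on the top segment and $\phi=\theta-\Eta+v$ on the upper-middle segment, and applies the envelope theorem in $P$. Your net gain per unit of $P'-\kappa$ is $\bigl[v(1-F(\theta^S))+\int_{\theta^S-v}^{\theta^S}(\theta+v-\theta^S)\,\D F\bigr]/(\tau-\theta^S)$. This is exactly $\eta\int\phi\,\D F$ with $\Eta=\theta^S$ and $\eta=1/(\tau-\theta^S)$.

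Your construction also shows why $\Lambda=0$ is right in that case: marginal bottom types absorb the capacity cut at no first-order cost. The same observation shows that the paper's envelope step is incomplete in the edge case. When $\theta^S<v$, every $\Lambda\in[0,v-\Eta]$ supports the benchmark. The $\Lambda=0$ slope is then only a supergradient of the concave value function. The true right derivative is $\eta\int_\Theta\min\{\theta,\Eta\}\,\D F$, smaller by $\eta(v-\Eta)$. So the direct computation above is what actually settles that case.

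Two minor slips. First, the upper-middle types also add recommended mass of order $b$, so not all of it comes from $B$. Second, \Cref{lem:cap-bind} does not say that $(p^*,Q^*)$ itself binds capacity. Instead, raise $p$ to $\kappa$ everywhere; this preserves IC and capacity and weakly relaxes obedience.
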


\begin{proof}
    When $p(\theta)\equiv \kappa$, $\sup_{\theta\in\Theta}p(\theta)=\kappa$, and when $P=\kappa$, the optimal $p$ in problem \eqref{eq:fix-P-max-over-p-Q} is $p(\theta)\equiv \kappa$. We go on to prove that any mechanism with $p(\theta)\equiv\kappa$ is always improvable.
    
    According to \Cref{prop:beautiful-graph}, the solution to problem \eqref{eq:fix-P-max-over-p-Q} given $P=\kappa$ takes the following form:
    \begin{equation*}
        p(\theta)\equiv \kappa ,\ Q(\theta)=\switch{
        &1, &\Eta&<\theta\leq 1,\\
        &\kappa, &\Eta-v&<\theta\leq \Eta,\\
        &0, &0&\leq\theta\leq \Eta-v,
        }
    \end{equation*}
    where $\Eta=\utt-\eta^{-1}$ and $(\lambda,\eta)$ are the corresponding multipliers satisfying $\Lambda=\lambda/\eta=0$. 
    
    Denote by $(p^*,Q^*)$ the optimal solution to the sender's problem \eqref{eq:sender-prob}. The condition $\int_\Theta Q^*(\theta)\D F(\theta)>0$ implies that at least someone is recommended in the best case. If $Q(\theta)=0$ almost everywhere, it is clearly strictly dominated by $(p^*,Q^*)$ which must correspond to $P>\kappa$.
    
    When $\int_\Theta Q(\theta)\D F(\theta)>0$, given that the obedience constraint binds, $Q$ cannot be equal to $1$ almost everywhere, and hence the second segment where $Q(\theta)=\kappa$ must be of strictly positive measure.

    Problem \eqref{eq:lagrangian} at $\Lambda=0$ becomes
    \begin{align*}
        \max_{p,Q}\ &\int_\Theta [(\theta-\Eta) Q(\theta) + v \min\{p(\theta),Q(\theta)\}]\D F(\theta),\\
        \st&\text{(IC$_1$) } Q(\theta)+p(\theta)u\geq \kappa u,\ \forall \theta\in\Theta,\\
        &\text{(IC$_2$) } p(\theta)\leq \kappa,\ \forall \theta\in\Theta.
    \end{align*}
    Here, we scaled the objective function by $\eta^{-1}$. We introduce Lagrangian multipliers, $\nu(\theta),\ \phi(\theta)\geq 0$, for constraint (IC$_1$) and (IC$_2$) respectively.

    For a given $\theta$, the pointwise optimization problem is 
    \begin{equation*}
    \begin{aligned}
        \max_{p(\theta),Q(\theta)}\ &(\theta -\Eta)Q(\theta) + v \min\{p(\theta),Q(\theta)\},\\
        \st&\text{(IC$_1$) } Q(\theta)+p(\theta)u\geq \kappa u,\\
        &\text{(IC$_2$) } p(\theta)\leq \kappa,\\
        &Q(\theta)\leq 1.
    \end{aligned}
    \end{equation*}
     We dropped the constraints $p(\theta),Q(\theta)\geq 0$ because it is never optimal to decrease $p(\theta)$, and $Q(\theta)\geq0$ is implied by the two IC constraints. We further introduce $\psi(\theta)$ to be the multiplier corresponding to $Q(\theta)\leq 1$.

     When $\theta\in(\Eta,1]$, the optimal $(p(\theta),Q(\theta))=(\kappa,1)$ suggests that (IC$_1$) is slack, i.e. $\nu(\theta)=0$, and $\min\{p(\theta),Q(\theta)\}=p(\theta)$. The Lagrangian problem is thus
     \begin{equation*}
         \max_{p(\theta),Q(\theta)}(\theta -\Eta -\psi(\theta))Q(\theta)+vp(\theta)-\phi(\theta)(p(\theta)-\kappa)+\psi(\theta).
     \end{equation*}
     First order condition gives $\phi(\theta)=v>0$.

     When $\theta\in(\Eta-v,\Eta]$, the optimal $(p(\theta),Q(\theta))=(\kappa,\kappa)$ suggests that (IC$_1$) and $Q(\theta)\leq 1$ are slack, i.e. $\nu(\theta)=\psi(\theta)=0$. We write the problem equivalently as
    \begin{equation*}
    \begin{aligned}
        \max_{p(\theta),Q(\theta),m}\ &(\theta -\Eta)Q(\theta) + v\cdot m,\\
        \st\ & p(\theta)\leq \kappa,\\
        & p(\theta)\geq m,\\
        & Q(\theta)\geq m.
    \end{aligned}
    \end{equation*}
    Let $\alpha,\beta\geq 0$ be the Lagrangian multipliers of $p(\theta)\geq m$ and $Q(\theta)\geq m$ respectively. The corresponding Lagrangian problem is
    \begin{equation*}
        \max_{p(\theta),Q(\theta)} (\theta -\Eta)Q(\theta)+v\cdot m-\phi(\theta)(p(\theta)-\kappa)-\alpha(m-p(\theta))-\beta(m-Q(\theta)).
    \end{equation*}
    First order conditions give
    \begin{align}
        \theta -\Eta+\beta&=0,\tag{FOC-$Q(\theta)$}\\
        -\phi(\theta)+\alpha&=0,\tag{FOC-$p(\theta)$}\\
        v-\alpha-\beta&=0.\tag{FOC-$m$}
    \end{align}
    These conditions yield
    \begin{equation*}
        \phi(\theta)=\alpha=\theta -\Eta+v> 0,\ \beta=\Eta-\theta \geq 0.
    \end{equation*}

    When $\theta\in[0,\Eta-v]$, we know from the optimal $(p(\theta),Q(\theta))=(\kappa,0)$ that $\min\{p(\theta),Q(\theta)\}=Q(\theta)$, and that $Q(\theta)\leq 1$ is slack, i.e. $\psi(\theta)=0$. The corresponding Lagrangian problem is thus
    \begin{equation*}
        \max_{p(\theta),Q(\theta)} (\theta -\Eta+v)Q(\theta)-\nu(\theta)(\kappa u-Q(\theta)-p(\theta)u)-\phi(\theta)(p(\theta)-\kappa).
    \end{equation*}
    First order condition of $p(\theta)$ gives
    \begin{align*}
        \nu(\theta)u-\phi(\theta)=0.
    \end{align*}

    In summary, at $P=\kappa$, the Lagrangian multipliers for incentive compatibility constraints in problem \eqref{eq:lagrangian} and therefore in problem \eqref{eq:fix-P-max-over-p-Q} satisfy
    $\nu(\theta)=0$, $\phi(\theta)>0$ for all $\theta\in(\Eta-v,1]\neq \varnothing$, and $\nu(\theta)u-\phi(\theta)=0$ for all $\theta\in[0,\Eta-v]$.

    Denote the optimal value of problem \eqref{eq:fix-P-max-over-p-Q} be $V$. By envelope theorem, 
    \begin{align*}
        \left.\frac{\partial V}{\partial P}\right|_{P=\kappa^+}&=\eta\int_\Theta[-\nu(\theta)u+\phi(\theta)]\D F(\theta)\\
        &=\eta\int_{\theta\in(\Eta-v,1]}[-\nu(\theta)u+\phi(\theta)]\D F(\theta)~+~\eta\int_{\theta\in[0,\Eta-v]}[-\nu(\theta)u+\phi(\theta)]\D F(\theta)\\
        &=\eta\int_{\theta\in(\Eta-v,1]}\phi(\theta)\D F(\theta)~+~0\\
        &>0.
    \end{align*}
    Hence, when $P=\kappa$, the sender will be strictly better off by increasing $P$, and thus $p(\theta)\equiv \kappa$ cannot be supported in any optimal direct mechanism.
\end{proof}

\Cref{prop:indirect-not-optimal} restricts the range of $P$ in \eqref{eq:solultion-pq} to $(\kappa,1]$ whenever the sender can recommend a strictly positive measure of agents in the best situation, which finishes the proof of \Cref{thm:optimal-direct}. \Cref{cor:joint-improves-separated} follows directly from \Cref{prop:indirect-not-optimal}.

\end{appendices}

\end{document}